\newcommand{\tabincell}[2]{\begin{tabular}{@{}#1@{}}#2\end{tabular}}
\renewcommand\arraystretch{0.69}
\newtheorem{definition}{Definition}
\newtheorem{lemma}{Lemma}
\newtheorem{theorem}{Theorem}
\newtheorem{corollary}{Corollary}
\newtheorem{example}{Example}
\newtheorem{remark}{Remark}
\begin{document}
\begin{sloppypar}
\begin{center}
{\LARGE \bf m-QMDS codes over mixed alphabets via orthogonal arrays}
\end{center}
\begin{center}
{\large Shanqi Pang, Mengqian Chen, Rong Yan, and Yan Zhu}\\
College of Mathematics and Information Science, Henan Normal University, Xinxiang 453007, China\\
Correspondence: Shanqi Pang, shanqipang@126.net
\end{center}
\noindent {\bf Abstract~---~}
The construction of quantum error-correcting codes (QECCs) with
good parameters is a hot topic in the area of quantum information and quantum computing. Quantum maximum distance separable (QMDS) codes are optimal because the minimum distance cannot be improved for a given length and code size. The QMDS codes over mixed alphabets are rarely known even if the existence and construction of QECCs over mixed alphabets with minimum distance more than or equal to three are still an open question. In this paper, we define an $m$-QMDS code over mixed alphabets, which is a generalization of QMDS codes. We establish a relation between $m$-QMDS codes over mixed alphabets and asymmetrical orthogonal arrays (OAs) with orthogonal partitions. Using this relation, we propose a general method to construct $m$-QMDS codes. As applications of this method, numerous infinite families of $m$-QMDS codes over mixed alphabets can be constructed explicitly. Compared with existing codes, the constructed codes have more flexibility in the choice of parameters, such as the alphabet sizes, length and dimension of the encoding state.

\noindent {\bf Keywords~---~}
Quantum error-correcting codes over mixed alphabets, uniform states, orthogonal arrays, orthogonal partitions, $m$-QMDS codes over mixed alphabets

\section{Introduction}
The no-go theorems like the no-cloning theorem \cite{Wootters82,Gisin97, Lamas02}, the no-broadcasting theorem \cite{Barnum96}, the no-deleting theorem \cite{Pati00}, and the no-hiding theorem \cite{Braunstein07} are consequences of the linearity and the unitarity of quantum mechanics. One of the problems with the feasibility of quantum computation is the difficulty of eliminating errors caused by inaccuracy and decoherence. Since the classical error-correcting techniques based on redundancy or repetition codes seemed to contradict the quantum no-cloning theorem, classical error-correcting codes can not be used in quantum computation. Quantum error-correcting codes will be necessary for preserving coherent states against noise and other unwanted interactions in quantum computation and communication \cite{Knill97}. Quantum systems are more fragile than classical systems. Errors are unavoidable when quantum information travels across a noisy channel \cite{shor95,Knill97,Knill00}. The primary tool to deal with different types of quantum noises is QECCs \cite{shor95, Knill97, Be96, Steane96}. They are the backbone of quantum fault-tolerant protocols needed to reliably operate scalable quantum computers and play an important role in quantum information tasks, such as quantum key distribution and entanglement purification \cite{gu24,Knill98,Gyorgy24,Bennett92,Liu21,Ambainis06,Duer07}. Since their discovery, the construction of QECCs has come a long way. The construction of QMDS codes over a single alphabet has been exhaustively investigated in the literature \cite{Hu08,fuyuan,hanxiao,yr23,Bierbrauer00,Chau97,Feng02,Grassl04,Ketkar06,Li08,Huber20}.
But we are often faced with a more complex situation where quantum information is encoded in quantum resources with different dimensions. In particular, we often use hybrid systems of different dimensions \cite{Yeyati07,Lansbergen08,Rabl10} to store, transmit, and process quantum information. Therefore, it is necessary to extend QECCs over a single alphabet to mixed alphabets \cite{Wang13}. However, QECCs over mixed alphabets are rarely studied. Other than some QECCs over mixed alphabets given in \cite{sf21,Wang13,yr23,chao24}, little is known about the existence of QECCs $((n,K,d+1))_{s_1s_2\cdots s_n}$, especially QMDS codes. There are two main reasons. First, the QMDS codes over mixed alphabets such as the codes with the dimension equal to one have not been
found so far. Second, it is difficult to construct such codes. Therefore, Pang et al. \cite{chao24} define a near quantum MDS (NQMDS) code and construct many NQMDS codes $((2d+1,1,d+1))_{s^{2d}2^1}$.

In this paper, we define an $m$-QMDS code over mixed alphabets, which is a generalization of QMDS codes and NQMDS codes. We establish a relation between $m$-QMDS codes over mixed alphabets and asymmetrical OAs with orthogonal partitions. Using this relation, we propose a general method to construct $m$-QMDS codes. As applications of this method, numerous infinite families of $m$-QMDS codes over mixed alphabets can be constructed explicitly. These quantum codes have a lot of advantages. They can detect all errors acting on at most $d$ subsystems and correct all errors acting on at most $\frac{d}{2}$ subsystems. It is possibly more convenient to design arithmetic and logic unit based on quantum technologies since each basis state of the quantum codes constructed is a uniform state which has far fewer terms and the coefficients of each term are all equal to 1.
As stated in \cite{Jin19,Simos18,Zhou09} the possible experimental realization techniques of near MDS codes will focus on secret sharing scheme. Since near MDS codes have similar properties to MDS codes, it is also possible that $m$-QMDS codes have similar properties to QMDS codes and will be applied to the construction of quantum secret sharing scheme.

The remainder of this paper is organized as follows. Section 2 provides symbols, concepts and lemmas, while Section 3 presents the main results. Some
conclusions are drawn in Section 4.

\section{Preliminaries}
To present our results, we first make some preparations.

$\mathbb{C}^s$ is a Hilbert space. Let $\mathbb{Z}_s^n$ denote the $n$ dimensional space over a ring $\mathbb{Z}_s=\{0,1,\ldots,s-1\}$. $\mathbb{F}_s$ denotes a Galois field.  $\mathbb{Z}_s=\mathbb{F}_s$ when $s$ is a prime power. Let $A^T$ be the transpose of the matrix $A$ and $\textbf{(s)}=(0,1,\ldots,s-1)^T$. Symbols $\textbf{0}_{r}$ and $\textbf{1}_{r}$ represent the $r\times 1$ vector of $0$s and $1$s, respectively. $\prod$ denotes continued-product. $C\subset \{1,2,\ldots,n\}$ means that $C$ is a subset of $\{1,2,\ldots,n\}$ and $C\neq \{1,2,\ldots,n\}$. $|C|$ denotes the number of elements in the set $C$.
$min\{a_1,a_2\}=\left\{
\begin{array}{l}
a_1, \ \text{if} \ a_1\leq a_2,\\
a_2, \ \text{if} \ a_1>a_2.
\end{array}
\right.$ $max\{a_1,a_2\}=\left\{
\begin{array}{l}
a_2, \ \text{if} \ a_1\leq a_2,\\
a_1, \ \text{if} \ a_1>a_2.
\end{array}
\right.$ If $A=(a_{ij})_{m\times n}$ and $B=(b_{ij})_{u\times v}$ have entries from a finite field with binary operations (usually addition $+$ multiplication $\cdot$), the Kronecker sum $A\oplus B$ is defined as $A\oplus B=(a_{ij}+ B)_{mu\times nv}$, where $a_{ij}+ B$ represents the $u\times v$ matrix with entries $a_{ij}+ b_{rs}$ $(1\leq r\leq u, 1\leq s\leq v)$, and the Kronecker product $A\otimes B$ is defined as $A\otimes B=(a_{ij}\cdot B)_{mu\times nv}$.

\begin{definition}\cite{hss}
An $r \times n$ matrix A, having $k_i$ columns with $s_i$ (levels), $i=1,\ldots, v$, $n=\sum \limits_{i=1}^v k_i$, $s_i\ne s_j$, for $i\neq j$, is called an orthogonal array $OA(r, n, s_1^{k_1}s_2^{k_2}\cdots s_v^{k_v}, t)$ of $t$ and size $r$, if each $r\times t$ submatrix of A contains all of the possible row vectors with the same frequency. An OA with $s_1=s_2=\cdots=s_v=s$ is called symmetric; otherwise, it is called asymmetric. If $\sum _{i=1}^vk_i(s_i-1)=r-1$, the ${\rm OA}(r,n,s_1^{k_1}s_2^{k_2}\cdots s_v^{k_v},2)$ is saturated.
\end{definition}

\begin{definition}\cite{hss}
Let $S^l=\{(v_1,\ldots,v_l)|v_i\in \mathbb{Z}_{s_i},i=1, 2,\ldots,l\}$. The Hamming distance HD(\textbf{u},\textbf{v}) between two vectors \textbf{u}$= (u_1,\ldots,u_l)$, \textbf{v}$= (v_1,\ldots,v_l) \in  S^l$ is defined as the number of positions in which they differ. The minimal distance of a matrix $A$, denoted by   $MD(A)$, is defined to be the minimal Hamming distance between its distinct rows. Let HD(A) denote all possible values of the Hamming distance between two distinct
rows of A.
\end{definition}

Let $\mathcal{A}$ be an additive group of $s$ elements. $\mathcal{A}^t$, $t\geq1$, denotes the additive group of order $s^t$ consisting of all $t$-tuples of entries from $\mathcal{A}$ with the usual vector addition as the binary operation. Let $\mathcal{A}_0^t=\{(x_1,x_2,\ldots,x_t):x_1=\cdots=x_t\in\mathcal{A}\}$. Then, $\mathcal{A}_0^t$ is a subgroup of $\mathcal{A}^t$ of order $s$, and its cosets will be denoted by $\mathcal{A}_i^t$, $i=1,\ldots,s^{t-1}-1$.

\begin{definition}\cite{hss}
An $r\times c$ matrix $D$ based on $\mathcal{A}$ is called a difference scheme of strength $t$ if, for every $r\times t$ submatrix, each set $\mathcal{A}_i^t$, $i=0,1,\ldots,s^{t-1}-1$, is represented equally often when the rows of the submatrix are viewed as elements of $\mathcal{A}^t$. Such a matrix is denoted by $D_t(r,c,s)$. Especially, $D_2(r,c,s)$ is short for $D(r, c, s)$.
\end{definition}

\begin{definition}\cite{wangjing}
Let $A$ be an $OA(r, n, s_1^{k_1}s_2^{k_2}\cdots s_v^{k_v}, t)$. Suppose the rows of $A$ can be partitioned into $u$ submatrices $A_{1}, A_{2}, \ldots, A_{u}$ such  that each $A_i$ is an $OA(\frac r u,n,s_1^{k_1}s_2^{k_2}\cdots s_v^{k_v},t_1)$ with $t_1\geq0$. Then the set $\{A_{1}, A_{2}, \ldots, A_{u}\}$ is called an orthogonal partition of strength $t_1$ of $A$.
\end{definition}

\begin{definition}\cite{sf21}\label{shifei}
A subspace of $\mathbb{C}^{s_1} \otimes \mathbb{C}^{s_2}\otimes \cdots \otimes\mathbb{C}^{s_n}$ $Q$ is an $((n,K,d+1))_{s_1,s_2,\ldots,s_n}$ $\rm QECC$ if for any $d$ parties, the reductions of all states in $Q$ to the $d$ parties are identical.
\end{definition}

Here $n$ is the number of qudits or the length of the code, $K$ is the dimension of the encoding state, $d+1$ is the minimum distance, and $s_1,s_2,\ldots,s_n$ are the alphabet sizes or $s_i$ indicates the dimensions of the $i$th particle.

\begin{definition}\cite{Wang13,sf21}\label{hunhejie}
An $((n,K,d+1))_{s_1 s_2\cdots s_n}$ QECC satisfies the quantum Singleton bound
$K\leq min \{
\prod_{j\in C}s_j|C\subset \{1,2,\ldots,n\}, |C|=n-2d
\}, \  for \ n\geq 2d+1, \ and \ K\leq 1, \ for \ n=2d.$
Specially, an $((n,K,d+1))_s$ QECC has the quantum Singleton bound $K\leq s^{n-2d}$.
\end{definition}

\begin{definition}\cite{chao24}
An $((n,K,d+1))_{s_1 s_2\cdots s_n}$ QECC is called an NQMDS code if
$K=min \{
\prod_{j\in C}s_j|C\subset \{1,2,\ldots,n\}, |C|=n-2d
\}-1, \ for \ n\geq 2d+1.$
\end{definition}

\begin{definition}\label{mQMDS}
An $((n,K,d+1))_{s_1 s_2\cdots s_n}$ QECC is called $m$-QMDS if
$K=min \{
\prod_{j\in C}s_j|C\subset \{1,2,\ldots,n\}, |C|=n-2d
\}-m, \ for \ n\geq 2d+1$.
\end{definition}

Obviously, an $m$-QMDS code is a QMDS code for $m=0$ and an $m$-QMDS code is an NQMDS code for $m=1$.

\begin{remark}\label{m}
If there exists an $((n,K,d+1))_{s_1 s_2\cdots s_n}$ QECC for
$K=min \{
\prod_{j\in C}s_j|C\subset \{1,2,\ldots,n\}, |C|=n-2d-1
\}, \ for \ n\geq 2d+1.$
We only consider the case
$0\leq m \leq min \{
\prod_{j\in C}s_j|C\subset \{1,2,\ldots,n\}, |C|=n-2d
\}
-min \{
\prod_{j\in C}s_j|C\subset \{1,2,\ldots,n\}, |C|=n-2d-1
\}$.
\end{remark}

\begin{lemma}\cite{hss,chenguangzhou}\label{chazhen}
(i) A difference scheme $D(s,s,s)$ and a difference scheme $D(2s,2s,s)$ exist for any prime power $s$. (ii) There is a difference matrix $D_3(s^2,4,s)$ over $\mathbb{Z}_s$ for any $s\geq2$.
\end{lemma}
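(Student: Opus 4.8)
The plan is to exhibit each of the three difference schemes explicitly and then verify the balance condition of Definition 3 directly. For $D(s,s,s)$ with $s$ a prime power, take $D$ to be the multiplication table of $\mathbb{F}_s$: writing $\mathbb{F}_s=\{x_0,\dots,x_{s-1}\}$, set $D=(x_ix_j)_{0\le i,j\le s-1}$. For any two columns $j\ne j'$ the componentwise difference in row $i$ equals $x_i(x_j-x_{j'})$, and since $x_j-x_{j'}\ne 0$ the map $x_i\mapsto x_i(x_j-x_{j'})$ permutes $\mathbb{F}_s$; hence the difference column runs once through $\mathbb{F}_s$ and each coset $\mathcal{A}_i^2$ of $\mathcal{A}_0^2$ is represented exactly once. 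That is the whole argument: the strength-$2$ property is just the fact that multiplication by a nonzero field element is a bijection of $\mathbb{F}_s$.

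For $D(2s,2s,s)$ the array we need is precisely a generalized Hadamard matrix $GH(s,2)$ over the additive group of $\mathbb{F}_s$, that is, a $2s\times 2s$ array realizing every group element as a difference of any two columns exactly twice. Here there is no one-line closed form; one has to invoke a genuine index-$2$ construction --- for instance the $GH(s,2)$ recorded in \cite{hss,chenguangzhou}, which lies in the same family as the Addelman--Kempthorne array $OA(2s^2,2s+1,s,2)$ --- and then verify the difference property one pair of columns at a time. I expect this step to be the main obstacle: the verification must treat pairs of columns lying in the \emph{same half} and in \emph{opposite halves} of the array separately, because a naive doubling that reuses the field multiplication table in every $s\times s$ block produces pairs of distinct columns with constant difference and therefore fails. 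Breaking this degeneracy --- with well-chosen additive shifts, and for odd $s$ a quadratic-residue twist --- is the delicate point; only for $s=2$ does it trivialize, via $D(2,2,2)\otimes D(2,2,2)=D(4,4,2)$.

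For $D_3(s^2,4,s)$ over $\mathbb{Z}_s$, valid for every $s\ge 2$, I would index the $s^2$ rows by $(x,y)\in\mathbb{Z}_s\times\mathbb{Z}_s$ and take the four columns to be the linear forms $0$, $x$, $y$, $x+y$, with coefficient vectors $(0,0),(1,0),(0,1),(1,1)$ in $\mathbb{Z}_s^2$. Any three of these columns, with coefficient vectors $c_1,c_2,c_3$, send a row $(x,y)$ to the coset of $\mathcal{A}_0^3$ determined by its pair of consecutive differences, which is the value at $(x,y)^T$ of the $\mathbb{Z}_s$-linear map whose matrix has rows $c_2-c_1$ and $c_3-c_1$. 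Running through the $\binom{4}{3}=4$ choices shows each of these $2\times 2$ determinants equals $\pm1$, hence is a unit in $\mathbb{Z}_s$, so the map is a bijection of $\mathbb{Z}_s^2$; since the cosets of $\mathcal{A}_0^3$ in $\mathbb{Z}_s^3$ are parametrized precisely by such difference pairs, every coset is hit exactly once by the $s^2$ rows, which is the strength-$3$ condition. For part (ii) I foresee no real difficulty --- everything reduces to those four determinants being coprime to $s$, which holds for all $s$.
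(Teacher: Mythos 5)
The paper offers no proof of this lemma at all: it is imported wholesale from \cite{hss,chenguangzhou}. Measured against that, your write-up actually does more than the paper, and the two parts you prove explicitly are correct. The multiplication-table construction of $D(s,s,s)$ is the standard one, and your verification is exactly right: the difference of columns $j\ne j'$ in row $i$ is $x_i(x_j-x_{j'})$, a bijective image of $\mathbb{F}_s$, so each coset of $\mathcal{A}_0^2$ appears once. Your part (ii) is also correct and is the cleanest way to see why $D_3(s^2,4,s)$ needs no field structure: with coefficient vectors $(0,0),(1,0),(0,1),(1,1)$, the four relevant $2\times2$ difference matrices have determinants $\pm1$ (I checked all four triples), hence are invertible over $\mathbb{Z}_s$ for every $s\ge2$, and the $s^2$ rows biject onto the $s^2$ cosets of $\mathcal{A}_0^3$. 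Your observation that the naive block-doubling fails for $D(2s,2s,s)$ is also accurate: two columns sitting in corresponding positions of the two halves would have difference distribution concentrated on two values rather than uniform.

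The one genuine gap is the part you yourself flag: the existence of $D(2s,2s,s)$ for every prime power $s$ is asserted, not proved. You correctly identify it as a generalized Hadamard matrix $GH(s,2)$ and correctly anticipate that the odd-characteristic case needs a quadratic-residue-type construction (the same mechanism behind the Addelman--Kempthorne array $OA(2s^2,2s+1,s,2)$ that the paper uses in Theorem 2), but you stop short of writing it down or verifying the difference property for same-half versus cross-half column pairs. Since the paper also treats this as a black box from \cite{hss,chenguangzhou}, your proposal is no less complete than the source; but as a self-contained proof of the lemma, part (i) for $D(2s,2s,s)$ remains open in your account, and that is the only substantive omission.
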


\begin{lemma}\cite{chenguangzhou}\label{chazhen3}
Let $D$ be a difference scheme $D_t(r,c,s)$. Then $D\oplus \textbf{(s)}=OA(rs,c,s,t)$.
\end{lemma}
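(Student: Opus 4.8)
The plan is to check the definition of an orthogonal array directly. Write $D=(d_{ij})_{r\times c}$ with entries in the additive group $\mathcal{A}$ of order $s$ underlying the scheme, regard $\textbf{(s)}$ as the column listing the $s$ elements of $\mathcal{A}$, and index the rows of $M:=D\oplus\textbf{(s)}$ by pairs $(i,p)$ with $1\le i\le r$ and $p\in\mathcal{A}$, so that by the definition of the Kronecker sum the entry of $M$ in row $(i,p)$ and column $j$ equals $d_{ij}+p$. Thus $M$ has $rs$ rows, $c$ columns, and $s$ levels in every column; the number that every $t$-tuple must attain is $\lambda=rs/s^{t}=r/s^{t-1}$, which is automatically a positive integer whenever a scheme $D_t(r,c,s)$ exists, since in any $r\times t$ submatrix of $D$ the $s^{t-1}$ cosets $\mathcal{A}_0^t,\ldots,\mathcal{A}_{s^{t-1}-1}^t$ are each represented $r/s^{t-1}$ times.

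Next I would fix any $t$ columns $j_1,\ldots,j_t$ and any target vector $\textbf{a}=(a_1,\ldots,a_t)\in\mathcal{A}^t$, and count the rows $(i,p)$ of $M$ for which $(d_{ij_1}+p,\ldots,d_{ij_t}+p)=\textbf{a}$. The key observation is that, for a fixed $i$, as $p$ runs over $\mathcal{A}$ the vector $(d_{ij_1}+p,\ldots,d_{ij_t}+p)$ runs \emph{bijectively} over the coset of $\mathcal{A}_0^t$ containing $(d_{ij_1},\ldots,d_{ij_t})$, because $p\mapsto(p,\ldots,p)$ is a bijection of $\mathcal{A}$ onto $\mathcal{A}_0^t$ and translation by an element of $\mathcal{A}_0^t$ preserves $\mathcal{A}_0^t$-cosets. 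Hence, if $(d_{ij_1},\ldots,d_{ij_t})$ lies in the same coset as $\textbf{a}$ there is exactly one $p$ with $(d_{ij_1}+p,\ldots,d_{ij_t}+p)=\textbf{a}$, and otherwise there is none.

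Finally I would invoke the defining property of the difference scheme, applied to the $r\times t$ submatrix of $D$ on columns $j_1,\ldots,j_t$: the coset $\mathcal{A}_\ell^t$ that contains $\textbf{a}$ is represented by exactly $r/s^{t-1}$ of its rows. Combining this with the previous step, exactly $r/s^{t-1}=\lambda$ rows of $M$ agree with $\textbf{a}$ on columns $j_1,\ldots,j_t$, and this count depends on neither $\textbf{a}$ nor the chosen columns. By the definition of an orthogonal array this shows $M=OA(rs,c,s,t)$, in the spirit of Lemma \ref{chazhen}. I do not anticipate any real obstacle: the statement is a short and standard consequence of the definitions, and the only point requiring a moment of care is the coset bookkeeping in the second step, namely that translation by $(p,\ldots,p)$ acts simply transitively within each $\mathcal{A}_0^t$-coset, which is immediate from $\mathcal{A}_0^t=\{(x,\ldots,x):x\in\mathcal{A}\}$.
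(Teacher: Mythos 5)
Your proof is correct. Note that the paper does not prove Lemma \ref{chazhen3} at all---it is imported from \cite{chenguangzhou} as a cited result---so there is no in-paper argument to compare against; your direct verification from the definitions (for fixed $i$, the map $p\mapsto(d_{ij_1}+p,\ldots,d_{ij_t}+p)$ bijects $\mathcal{A}$ onto the $\mathcal{A}_0^t$-coset of $(d_{ij_1},\ldots,d_{ij_t})$, and the difference-scheme property supplies exactly $r/s^{t-1}$ rows of $D$ landing in the coset of any target $\textbf{a}$, giving the constant count $\lambda=rs/s^t$) is the standard proof of this fact and is complete.
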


\begin{lemma}\cite{npj}\label{zuixiaojuli}
If $A$ is an $OA(s^t,n,s,t)$, then $MD(A)=n-t+1$.
\end{lemma}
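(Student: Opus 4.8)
The plan is to bound $MD(A)$ from below and from above separately, each time using the defining property of the orthogonal array at a suitably chosen strength, and then to combine the two estimates.

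For the lower bound I would work at strength exactly $t$. Since $A=OA(s^t,n,s,t)$ has $s^t$ rows, in any $t$ of its columns each of the $s^t$ possible $t$-tuples occurs with frequency $s^t/s^t=1$; in particular no two distinct rows of $A$ can agree on any set of $t$ columns. Hence two distinct rows agree in at most $t-1$ coordinates, so they differ in at least $n-(t-1)=n-t+1$ coordinates. (This also shows that all rows of $A$ are pairwise distinct, so $MD(A)$ is well defined.) Therefore $MD(A)\ge n-t+1$.

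For the upper bound I would drop the strength to $t-1$: an orthogonal array of strength $t$ is automatically one of strength $t-1$, and viewed as $OA(s^t,n,s,t-1)$ it has index $s^t/s^{t-1}=s\ge 2$. Fix any $t-1$ columns of $A$; since every $(t-1)$-tuple then occurs $s\ge 2$ times, there exist two distinct rows of $A$ agreeing on all of those columns, hence differing in at most $n-(t-1)=n-t+1$ coordinates. Thus $MD(A)\le n-t+1$, and combining with the lower bound gives $MD(A)=n-t+1$.

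The argument is elementary, so I do not anticipate a serious obstacle; the only points requiring care are the two index computations ($\lambda=1$ at strength $t$ and $\lambda=s$ at strength $t-1$) and the observation that $s\ge 2$ is exactly what forces a repeated $(t-1)$-tuple, and hence the matching upper bound. An alternative route would be to invoke the equivalence between index-unity orthogonal arrays and MDS codes together with the classical Singleton bound, but the direct counting argument above stays entirely within the terminology already introduced in Section~2.
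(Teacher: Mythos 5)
Your proof is correct. The paper does not prove this lemma itself --- it imports it from the cited reference \cite{npj} --- but your two-sided counting argument (index $\lambda=1$ at strength $t$ forces agreement on at most $t-1$ coordinates, hence $MD(A)\ge n-t+1$; index $\lambda=s\ge 2$ at strength $t-1$ produces two distinct rows agreeing on $t-1$ chosen coordinates, hence $MD(A)\le n-t+1$) is exactly the standard derivation of this fact, equivalent to the statement that an index-unity orthogonal array is an MDS code meeting the Singleton bound, and both index computations are right.
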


\begin{lemma}\cite{zx21Q}\label{kuozhang}
(Expansive replacement method) Suppose $A$ is an OA of strength $t$ with column 1
having $s_1$ levels and that $B$ also is an OA of strength $t$ with $s_1$ rows. After making a one-to-one mapping
between the levels of column 1 in $A$ and the rows of $B$, if each level of column 1 in $A$ is replaced by the corresponding row from $B$, we can obtain an OA of strength $t$.
\end{lemma}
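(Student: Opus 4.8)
The plan is to verify directly that the replaced array satisfies the defining uniformity property of strength $t$. Write $A=OA(r,n,s_1^{k_1}s_2^{k_2}\cdots,t)$ with its first column carrying the $s_1$ levels $0,1,\ldots,s_1-1$, let $B$ be an $OA(s_1,m,\ldots,t)$, and let $\phi$ be the fixed bijection from the $s_1$ levels of column $1$ of $A$ to the $s_1$ rows of $B$. The resulting array $A'$ then has $r$ rows and $m+n-1$ columns: the first $m$ of them (the \emph{$B$-columns}) arise by substituting the row $\phi(a)$ of $B$ in place of each level $a$ occurring in column $1$ of $A$, while the last $n-1$ (the \emph{$A$-columns}) are columns $2,\ldots,n$ of $A$ left untouched. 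The goal is to show that for every choice of $t$ columns of $A'$ the corresponding $r\times t$ submatrix is balanced.

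First I would fix $t$ columns of $A'$ and split them by type: say $j$ of them are $B$-columns, coming from distinct columns $i_1,\ldots,i_j$ of $B$, and the remaining $t-j$ are $A$-columns, coming from distinct columns $c_1,\ldots,c_{t-j}$ of $A$ (each $c_\ell\ge 2$). If $j=0$ the chosen submatrix is literally an $r\times t$ submatrix of $A$, so its balance is immediate from the strength $t$ of $A$, and we are done in that case; so assume $j\ge 1$.

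For $j\ge 1$, fix an admissible target vector $(\beta_1,\ldots,\beta_j,\gamma_1,\ldots,\gamma_{t-j})$. The key observation is that a row of $A'$ has its $B$-part equal to $(\beta_1,\ldots,\beta_j)$ exactly when the level $a$ in column $1$ of the underlying row of $A$ satisfies $\phi(a)\in R$, where $R$ is the set of rows of $B$ whose entries in columns $i_1,\ldots,i_j$ are $\beta_1,\ldots,\beta_j$. Since $B$ has strength $t\ge j$ it also has strength $j$, so $|R|$ equals $s_1$ divided by the product of the level-counts of columns $i_1,\ldots,i_j$, and in particular $|R|$ does not depend on the $\beta_k$. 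Thus $L:=\phi^{-1}(R)$ is a subset of the levels of column $1$ of $A$ whose size is independent of the target. The rows of $A'$ matching the full target are then precisely the rows of $A$ with $a_1\in L$ and $a_{c_1}=\gamma_1,\ldots,a_{c_{t-j}}=\gamma_{t-j}$. Because the columns $1,c_1,\ldots,c_{t-j}$ of $A$ are $1+(t-j)\le t$ distinct columns, the strength of $A$ guarantees that for each \emph{single} level $\ell$ of column $1$ the number of rows of $A$ with $a_1=\ell$ and the prescribed values in $c_1,\ldots,c_{t-j}$ is a constant $\lambda$, independent of $\ell$ and of the $\gamma_k$ (when $j=t$ this degenerates to: each level of column $1$ occurs $\lambda=r/s_1$ times). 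Summing over $\ell\in L$ gives $|L|\lambda$ rows, a number independent of the chosen target vector; hence the $r\times t$ submatrix of $A'$ is balanced, and since the $t$ columns were arbitrary, $A'$ has strength $t$.

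The argument uses only two standard facts, and I expect the main — still routine — difficulty to be purely notational: one must carry the possibly distinct level-counts cleanly through the asymmetric bookkeeping, and one must invoke the elementary remark that an OA of strength $t$ is automatically an OA of strength $t'$ for every $t'\le t$, applied to $B$ with $t'=j$ and to $A$ with $t'=1+(t-j)$. Once that is in place, the conceptual heart is simply that any constraint on the $m$ new columns collapses to the requirement that the original column $1$ take values in a subset of fixed size, after which the strength of $A$ does all the work.
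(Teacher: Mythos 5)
The paper does not prove this lemma at all --- it is imported verbatim from the cited reference \cite{zx21Q} --- so there is no in-paper argument to compare against. Your direct verification is correct and complete: splitting the $t$ chosen columns into $j$ $B$-columns and $t-j$ $A$-columns, reducing the constraint on the $B$-part to the condition $\phi(a)\in R$ with $|R|$ constant by the strength-$j$ property of $B$, and then counting $|L|\lambda$ rows via the strength of $A$ applied to the $1+(t-j)\le t$ columns $1,c_1,\ldots,c_{t-j}$ is exactly the standard argument (essentially the one in Hedayat--Sloane--Stufken), including the correct handling of the degenerate case $j=t$. No gaps.
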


\begin{lemma}\cite{chao24}\label{ym}
Assume that an $OA(r,n,s_1s_2 \cdots s_n,t)$ exists with $MD=h$ and an orthogonal partition $\{A_1,\ldots,A_K\}$ of strength $t'$. Let $d+1=min\{t'+1,h\}$. Then, there exists an $((n,K,d+1))_{s_1s_2\cdots s_n}$ QECC.
\end{lemma}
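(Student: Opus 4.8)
The plan is to realize the code space explicitly from the orthogonal partition and then verify Definition~\ref{shifei} by a direct partial-trace computation. Each part $A_i$ has $r/K$ rows, and I would take
\[
|\phi_i\rangle \;=\; \sqrt{\tfrac{K}{r}}\,\sum_{\mathbf a}|\mathbf a\rangle,\qquad i=1,\ldots,K,
\]
where $\mathbf a=(a_1,\ldots,a_n)$ runs over the rows of $A_i$, each read as a computational basis ket of $\mathbb{C}^{s_1}\otimes\cdots\otimes\mathbb{C}^{s_n}$. Since $\{A_1,\ldots,A_K\}$ partitions the rows of $A$ and those rows are pairwise distinct, the $|\phi_i\rangle$ are unit vectors supported on disjoint sets of basis kets, hence orthonormal; set $Q=\mathrm{span}\{|\phi_1\rangle,\ldots,|\phi_K\rangle\}$, so $\dim Q=K$, and it remains to check that $Q$ has minimum distance $d+1=\min\{t'+1,h\}$.

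Two consequences of the choice of $d$ do all the work. First, $d+1\le h=MD(A)$ forces $d<MD(A)$: any two distinct rows of $A$ differ in at least $h>d$ coordinates, hence agree on at most $n-h\le n-d-1$ of the $n$ coordinates, so no two distinct rows of $A$ coincide on any prescribed set of $n-d$ columns. Second, $d+1\le t'+1$ gives $d\le t'$, so each $A_i$ has strength at least $d$: for every $d$-subset $D\subseteq\{1,\ldots,n\}$, the restriction of $A_i$ to the columns indexed by $D$ contains each tuple of $\prod_{j\in D}\mathbb{Z}_{s_j}$ exactly $\frac{r/K}{\prod_{j\in D}s_j}$ times.

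I would then fix a $d$-subset $D$ with complement $\bar D$, $|\bar D|=n-d$, and use $\mathrm{Tr}_{\bar D}\big(|\mathbf a\rangle\langle\mathbf b|\big)=\delta_{\mathbf a_{\bar D},\mathbf b_{\bar D}}\,|\mathbf a_D\rangle\langle\mathbf b_D|$. By the first fact applied to the column set $\bar D$, any nonzero term in $\mathrm{Tr}_{\bar D}\big(|\phi_i\rangle\langle\phi_j|\big)$ would require two rows of $A$ agreeing on $\bar D$, i.e.\ two equal rows; hence $\mathrm{Tr}_{\bar D}\big(|\phi_i\rangle\langle\phi_j|\big)=0$ for $i\ne j$, while for $i=j$ only the diagonal terms survive and, by the second fact,
\[
\mathrm{Tr}_{\bar D}\big(|\phi_i\rangle\langle\phi_i|\big)=\tfrac{K}{r}\sum_{\mathbf a\in A_i}|\mathbf a_D\rangle\langle\mathbf a_D|=\frac{1}{\prod_{j\in D}s_j}\,\mathbb I,
\]
independently of $i$. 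Consequently, for any $|\psi\rangle=\sum_i c_i|\phi_i\rangle\in Q$ with $\sum_i|c_i|^2=1$ one gets $\mathrm{Tr}_{\bar D}\big(|\psi\rangle\langle\psi|\big)=\sum_{i,j}c_i\overline{c_j}\,\mathrm{Tr}_{\bar D}\big(|\phi_i\rangle\langle\phi_j|\big)=\frac{1}{\prod_{j\in D}s_j}\,\mathbb I$, the same density matrix for every state of $Q$ and every $d$-subset $D$. By Definition~\ref{shifei}, $Q$ is an $((n,K,d+1))_{s_1s_2\cdots s_n}$ QECC.

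The partial-trace manipulation and the frequency count from strength $t'$ are routine; the point requiring care — and, I expect, the main obstacle toward a fully general argument — is the bookkeeping around repeated rows. The inequality $d<MD(A)$ only constrains pairs of \emph{distinct} rows, so for an arbitrary OA one must separately ensure (or assume, or arrange by passing to an irredundant sub-array) that the rows feeding the $|\phi_i\rangle$ are pairwise distinct; otherwise the $|\phi_i\rangle$ need not be orthonormal and $\dim Q$ may drop below $K$. In the explicit families constructed in Section~3 the underlying arrays are irredundant, so this difficulty is vacuous there.
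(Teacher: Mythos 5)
The paper gives no proof of this lemma — it is quoted verbatim from \cite{chao24} — and your argument (uniform superpositions over the parts of the orthogonal partition, with the strength condition $d\le t'$ yielding the identical maximally mixed reduction $\frac{1}{\prod_{j\in D}s_j}\,\mathbb{I}$ on every $d$-subset and the condition $d<MD(A)$ annihilating all cross terms in the partial trace) is precisely the construction used in that source, so the proposal is correct and takes the same route. Your closing caveat is well spotted but minor: since $MD$ only constrains \emph{distinct} rows, the lemma implicitly assumes the OA has no repeated rows (otherwise the $|\phi_i\rangle$ need not be orthonormal and the dimension can drop below $K$), an assumption satisfied by every array to which the lemma is applied in Section 3.
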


\begin{lemma}\cite{cstm} \label{zx2}
Consider any two distinct rows of a saturated $OA(r,m_1+m_2,s_1^{m_1} s_2^{m_2},2)$, A. For $i=1,2$, assume $d_i$ is the Hamming
distance between these two rows of the $s_i$-level columns. Then $HD(A)=\{d_1+d_2: s_1d_1+s_2d_2=r,0\leq d_i\leq m_i, i=1,2\}$. Particularly, if $r=s_1^2$, then
$HD(A)=\{\frac{s_1^2+(s_2-s_1)(m_1-1)}{s_2},
\frac{s_1^2+(s_2-s_1)m_1}{s_2}\}$.
\end{lemma}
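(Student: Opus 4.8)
The plan is to reduce the determination of $HD(A)$ to a single linear constraint coming from the orthogonality built into a strength-$2$ array, and then to sharpen this constraint when $r=s_1^2$ by exploiting the structure carried by the $s_1$-level columns alone.

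For the first step I would embed the rows of $A$ into $\mathbb{C}^r$ through characters. For a column of $A$ with level set $\mathbb{Z}_s$ and a nontrivial additive character $\chi$ of $\mathbb{Z}_s$, let $v_\chi\in\mathbb{C}^r$ be obtained by applying $\chi$ entrywise to that column, and let $\mathbf{1}\in\mathbb{C}^r$ be the all-ones vector. Strength $2$ forces any two of these vectors---from distinct columns, or from one column with two distinct characters, or one of them being $\mathbf{1}$---to be orthogonal, and each has squared norm $r$; saturation, $(s_1-1)m_1+(s_2-1)m_2=r-1$, says there are precisely $r$ of them, so $\{\mathbf{1}\}\cup\{v_\chi\}$ is an orthogonal basis of $\mathbb{C}^r$. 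Writing the identity operator in this basis and reading off the $(i_1,i_2)$-entry for two distinct rows $i_1\ne i_2$, the $\mathbf{1}$-term contributes $1$ and the rest must cancel it, so $\sum_{v_\chi}v_\chi(i_1)\overline{v_\chi(i_2)}=-1$. Grouping this sum by column, the contribution of a column equals $\sum_{\chi\ne 1}\chi(a_{i_1 j}-a_{i_2 j})$, which is $s_j-1$ if the two rows agree on that column and $-1$ if they differ. Writing $d_1,d_2$ for the number of $s_1$- resp. $s_2$-level columns on which the two rows differ, this becomes $(s_1-1)(m_1-d_1)-d_1+(s_2-1)(m_2-d_2)-d_2=-1$, which collapses, using the saturation identity, to $s_1 d_1+s_2 d_2=r$. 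This proves $HD(A)\subseteq\{d_1+d_2:\ s_1 d_1+s_2 d_2=r,\ 0\le d_i\le m_i\}$; the reverse inclusion follows by exhibiting, for each admissible pair $(d_1,d_2)$, two rows realizing it.

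For the displayed special case, assume $r=s_1^2$. Then the $m_1$ columns with $s_1$ levels form an $OA(s_1^2,m_1,s_1,2)$ of index unity, whose $s_1^2$ rows are pairwise distinct (inspect any two columns) and which, by Lemma~\ref{zuixiaojuli}, has minimal distance $m_1-1$; hence any two distinct rows of $A$ already differ on at least $m_1-1$ of these columns, i.e. $d_1\in\{m_1-1,m_1\}$. Substituting into $d_1+d_2=(s_1^2+(s_2-s_1)d_1)/s_2$ (the rewriting of $s_1 d_1+s_2 d_2=s_1^2$) yields exactly the two values $\frac{s_1^2+(s_2-s_1)(m_1-1)}{s_2}$ and $\frac{s_1^2+(s_2-s_1)m_1}{s_2}$. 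That both are attained follows from the fact that the $s_1$-level sub-array sits inside a saturated $OA(s_1^2,s_1+1,s_1,2)$, in which any two distinct rows agree in exactly one coordinate; deleting coordinates down to $m_1$ columns leaves pairs of rows agreeing in either $1$ or $0$ of the remaining coordinates, and (for $m_1\le s_1$) both cases occur.

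I expect the main obstacle to be the first step: getting the orthogonality bookkeeping exactly right for a mixed-level saturated array---verifying that the $v_\chi$ together with $\mathbf{1}$ form an orthogonal basis and that the per-column sums $\sum_{\chi\ne 1}\chi(a_{i_1 j}-a_{i_2 j})$ evaluate to $s_j-1$ or $-1$ as claimed. Once that is in place, the simplification via saturation is routine and the $r=s_1^2$ refinement is short. A secondary subtlety is the reverse inclusion and the degenerate cases (for instance $m_1=s_1+1$, where only $d_1=m_1-1$ is possible and $HD(A)$ is a singleton, or $s_1$ not a prime power so that the embedding into a saturated $OA(s_1^2,s_1+1,s_1,2)$ need not exist); there the two-element set should be understood as containing $HD(A)$.
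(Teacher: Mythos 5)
This lemma is quoted from \cite{cstm}; the paper under review gives no proof of it, so there is no in-paper argument to compare against. Judged on its own terms, your central computation is correct and is the standard route to such results: the character vectors $v_\chi$ together with $\mathbf{1}$ are pairwise orthogonal by strength $2$, saturation makes them a full orthogonal basis of $\mathbb{C}^r$, and reading the off-diagonal entries of the resolution of the identity gives the per-column sums $s_j-1$ (agree) or $-1$ (differ), which collapse via $(s_1-1)m_1+(s_2-1)m_2=r-1$ to $s_1d_1+s_2d_2=r$. That cleanly establishes the inclusion $HD(A)\subseteq\{d_1+d_2: s_1d_1+s_2d_2=r,\ 0\leq d_i\leq m_i\}$, and likewise $d_1\in\{m_1-1,m_1\}$ when $r=s_1^2$, since the $s_1$-level projection is an index-unity $OA(s_1^2,m_1,s_1,2)$ with minimal distance $m_1-1$ by Lemma \ref{zuixiaojuli}.

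The gaps are in the two reverse inclusions, which the lemma's set equalities require. In the general case you only assert that each admissible pair $(d_1,d_2)$ is realized by some pair of rows; that is precisely the nontrivial half of the claimed equality and cannot be waved through. In the $r=s_1^2$ case your attainment argument embeds the $s_1$-level subarray into a saturated $OA(s_1^2,s_1+1,s_1,2)$, which need not exist when $s_1$ is not a prime power, and even when it exists the given subarray need not extend to it. This crutch is unnecessary: fix a row $x$ of the subarray; each column contributes exactly $s_1-1$ other rows agreeing with $x$ there, and no other row agrees with $x$ in two positions (index unity), so exactly $m_1(s_1-1)$ rows agree with $x$ somewhere and $s_1^2-1-m_1(s_1-1)$ rows differ from $x$ everywhere. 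The first count is positive, and the second is positive precisely when $m_1\leq s_1$, which holds automatically here because $m_2\geq 1$ forces $m_1(s_1-1)<r-1$. That direct count realizes both $d_1=m_1-1$ and $d_1=m_1$ with no existence hypotheses and closes the special case; the general reverse inclusion would still need a separate argument (or the lemma should be read as a containment, which is all this paper actually uses).
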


\begin{lemma}\cite{hss}\label{bush}
(i) If $s\geq 2$ is a prime power then an $OA(s^t,s+1,s,t)$ of index unity exists whenever $s\geq t-1\geq 0$. (ii) If $s=2^m$, $m\geq1$, and $t=3$ then there exists an $OA(s^3,s+2,s,t)$.
\end{lemma}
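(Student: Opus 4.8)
The plan is to realise both arrays explicitly as polynomial evaluation codes over $\mathbb{F}_s$ — Bush's classical construction — and to read off the strength by elementary Lagrange interpolation; no difference schemes or recursive constructions should be needed here, in contrast with most of the preceding lemmas.

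For part (i) I would index the $s^t$ rows of the array by the polynomials $f(x)=a_0+a_1x+\cdots+a_{t-1}x^{t-1}$ over $\mathbb{F}_s$, and take as the $s+1$ columns the $s$ evaluation maps $f\mapsto f(\alpha)$, one for each $\alpha\in\mathbb{F}_s$, together with one further column $f\mapsto a_{t-1}$ recording the leading coefficient (equivalently, an evaluation at the point at infinity). To prove that this is an $OA$ of strength $t$ and index unity, I would fix an arbitrary set of $t$ of these columns and show that the induced map $\mathbb{F}_s^t\to\mathbb{F}_s^t$ on coefficient vectors is a bijection, which is exactly the assertion that the corresponding $s^t\times t$ subarray lists every element of $\mathbb{F}_s^t$ exactly once. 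There are only two cases: if all $t$ chosen columns are evaluations at distinct points $\alpha_1,\dots,\alpha_t$, Lagrange interpolation produces a unique polynomial of degree $<t$ through any prescribed values; if the leading-coefficient column is among the $t$, then prescribing $a_{t-1}$ together with the values at $t-1$ distinct points again determines $f$ uniquely (subtract $a_{t-1}x^{t-1}$ and interpolate the degree-$<(t-1)$ remainder). The boundary values $t=s$ and $t=s+1$ need only a one-line check, namely that two polynomials of degree $<t$ agreeing on all of $\mathbb{F}_s$ differ by a scalar multiple of $x^s-x$, so that the leading-coefficient column still completes the bijection. Alternatively, part (i) is immediate from the doubly extended Reed--Solomon code, an $[s+1,t,s+2-t]_s$ MDS code, via the standard correspondence between $q$-ary linear $[n,k]$ MDS codes and orthogonal arrays $OA(q^k,n,q,k)$ of index unity.

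For part (ii) I would keep the same row-indexing with $t=3$ and $f(x)=a_0+a_1x+a_2x^2$, but enlarge the column set to $s+2$ columns by recording \emph{both} $a_1$ and $a_2$ in addition to the $s$ evaluation columns. Rerunning the analysis over all triples of columns, every configuration is handled as in (i) except one genuinely new one: two evaluations $f(\alpha),f(\beta)$ with $\alpha\ne\beta$ together with the column $f\mapsto a_1$. Here $f(\alpha)+f(\beta)=a_1(\alpha+\beta)+a_2(\alpha^2+\beta^2)$, and since $s$ is a power of $2$ we have $\alpha^2+\beta^2=(\alpha+\beta)^2$ in $\mathbb{F}_s$, so $f(\alpha)+f(\beta)=(\alpha+\beta)\bigl(a_1+a_2(\alpha+\beta)\bigr)$; because $\alpha+\beta\ne0$ this recovers $a_1+a_2(\alpha+\beta)$, hence $a_2$ (using the known $a_1$), hence $a_0$, so the triple is bijective. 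The remaining triples — three evaluations, two evaluations plus $a_2$, one evaluation plus $\{a_1,a_2\}$ — are immediate, and we conclude that an $OA(s^3,s+2,s,3)$ of index unity exists. In coding-theoretic terms this is the $[s+2,3,s]_s$ MDS code arising from a hyperoval of $PG(2,s)$, which exists precisely because $s$ is even.

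The main obstacle, as the above indicates, is isolated entirely in part (ii): recognising that the $a_1$-column can be adjoined at all rests on the characteristic-$2$ identity $\alpha^2+\beta^2=(\alpha+\beta)^2$, and it is worth a sanity check to confirm that the construction genuinely does not extend to $s+1$ columns in odd characteristic, where a pair of opposite evaluation points $\alpha,-\alpha$ together with the $a_1$-column fails to separate $a_2$. Everything else — the interpolation bijections of part (i) and the bookkeeping at the boundary $t\in\{s,s+1\}$ — is routine.
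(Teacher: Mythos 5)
The paper gives no proof of this lemma at all---it is quoted directly from the reference \cite{hss}---so there is nothing internal to compare against; your argument is a correct, self-contained reconstruction of exactly the construction (Bush's polynomial-evaluation arrays, with the extra $a_1$-column adjoined via the hyperoval/nucleus trick in characteristic $2$ for part (ii)) by which the cited source establishes the result. The interpolation bijections, the boundary cases $t\in\{s,s+1\}$, and the characteristic-$2$ identity $\alpha^2+\beta^2=(\alpha+\beta)^2$ that makes the $a_1$-column work are all handled correctly, so nothing further is needed.
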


\begin{lemma}\label{chengtui}
Assume that A is an $OA(r_1, n, s_1s_2\cdots s_n, t)$ with $MD(A)=h_1$, and that B is an $OA(r_2, n, q_1q_2\cdots q_n, t)$ with $MD(B)=h_2$. Let $h=min\{h_1, h_2\}$. Then, there exists an $OA(r_1r_2,n,(s_1q_1)(s_2q_2)\cdots (s_nq_n), t)$ with $MD=h$.
\end{lemma}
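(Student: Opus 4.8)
The plan is to build the product array by taking the Kronecker product (pairwise, column by column) of $A$ and $B$, then verify strength $t$ and compute the minimal distance. Concretely, write $A=(a_{ij})_{r_1\times n}$ with the $j$-th column on $s_j$ levels and $B=(b_{kj})_{r_2\times n}$ with the $j$-th column on $q_j$ levels. Define the $r_1r_2\times n$ array $C$ whose rows are indexed by pairs $(i,k)$ and whose $(j)$-th entry in row $(i,k)$ is the pair $(a_{ij},b_{kj})\in\mathbb{Z}_{s_j}\times\mathbb{Z}_{q_j}$, which we identify with a symbol on $s_jq_j$ levels via the bijection $(x,y)\mapsto xq_j+y$. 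This is exactly the column-wise Kronecker-product construction; I will describe it in that language so the notation in the excerpt ($\otimes$) applies directly.

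\textbf{Strength.} The key step is that $C$ has strength $t$. Fix any $t$ columns $j_1,\dots,j_t$. Restricting $A$ to these columns gives an $OA(r_1,t,s_{j_1}\cdots s_{j_t},t)$, so every combination of levels $(x_1,\dots,x_t)$ occurs in exactly $\lambda_A=r_1/(s_{j_1}\cdots s_{j_t})$ rows of $A$; likewise every $(y_1,\dots,y_t)$ occurs in exactly $\lambda_B=r_2/(q_{j_1}\cdots q_{j_t})$ rows of $B$. A row $(i,k)$ of $C$ realizes the combined pattern $((x_1,y_1),\dots,(x_t,y_t))$ iff row $i$ of $A$ realizes $(x_1,\dots,x_t)$ and row $k$ of $B$ realizes $(y_1,\dots,y_t)$; the number of such pairs $(i,k)$ is $\lambda_A\lambda_B=r_1r_2/\prod_{\ell}(s_{j_\ell}q_{j_\ell})$, independent of the pattern. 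Hence $C$ is an $OA(r_1r_2,n,(s_1q_1)\cdots(s_nq_n),t)$. (Note this is just Lemma~\ref{kuozhang}, the expansive replacement method, applied in product form, or a direct counting argument as above.)

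\textbf{Minimal distance.} For two distinct rows $(i,k)\neq(i',k')$ of $C$, the $j$-th coordinates agree iff $a_{ij}=a_{i'j}$ \emph{and} $b_{kj}=b_{k'j}$; equivalently they differ iff row $i,i'$ of $A$ differ in column $j$ \emph{or} rows $k,k'$ of $B$ differ in column $j$. Thus the Hamming distance in $C$ equals $n-|\{j: a_{ij}=a_{i'j}\ \text{and}\ b_{kj}=b_{k'j}\}|$. If $i=i'$ (so $k\neq k'$), this distance equals $HD_B(k,k')\ge h_2\ge h$; symmetrically if $k=k'$ it is $\ge h_1\ge h$; and if $i\neq i'$ and $k\neq k'$ the set of agreeing coordinates is contained in the agreement set of the $A$-rows, so the distance is $\ge HD_A(i,i')\ge h_1\ge h$. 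Hence $MD(C)\ge h$. For the reverse inequality, pick rows of $A$ achieving $HD_A=h_1$; pairing them with two rows of $B$ that agree on the coordinates where those $A$-rows \emph{differ} is not generally possible, so instead I take $h=\min\{h_1,h_2\}$, say $h=h_1$, choose $i,i'$ with $HD_A(i,i')=h_1$, and set $k=k'$ to be any single row of $B$; the resulting two rows of $C$ are at distance exactly $h_1=h$. Therefore $MD(C)=h$.

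\textbf{Main obstacle.} The strength argument and the lower bound $MD(C)\ge h$ are routine counting. The only point requiring a little care is the exactness $MD(C)=h$ rather than merely $\ge h$: one must exhibit a pair of rows at distance exactly $h$, and the natural choice — take the closest pair in whichever of $A,B$ realizes the minimum and freeze the other coordinate — works precisely because $h=\min\{h_1,h_2\}$. I would also remark that the construction immediately generalizes to more than two factors by iteration, which is how it will be used in Section~3.
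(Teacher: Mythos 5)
Your proof is correct and is essentially the argument the paper intends: the paper's own proof is only the one-line remark that it is ``similar to Lemma 7 in \cite{yr23}'', and that lemma is proved by exactly this column-wise Kronecker/pairing construction, with the same strength count and the same case analysis (freeze one factor to get distance exactly $h=\min\{h_1,h_2\}$, and observe the agreement set only shrinks otherwise). So you have simply written out in full the argument the paper leaves implicit.
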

\begin{proof}
The proof is similar to the Lemma 7 in \cite{yr23}.
\end{proof}

\begin{lemma}\cite{hss}\label{qu1}
Taking the runs in an $OA(r,n,s,t)$ that begin with 0 (or any other particular symbol) and omitting the first column yields an
$OA(r/s,n-1,s,t-1)$. If we assume that these are the initial runs, the process can be represented by the following diagram:
\begin{center}
\begin{tabular}{|c|c|}\hline
0&\\
\vdots&OA(r/s,n-1,s,t-1)\\
0&\\\hline
1&\\
\vdots&OA(r/s,n-1,s,t-1)\\
1&\\\hline
&\\
\vdots&\vdots\\
&\\\hline
\end{tabular}
\end{center}
\end{lemma}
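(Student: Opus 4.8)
The plan is to verify the defining frequency condition for strength $t-1$ directly, by leaning on the strength-$t$ condition of the original array applied to the surviving columns together with the deleted first column; no auxiliary construction is required.

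First I would fix notation and settle the size. Since $A$ is an $OA(r,n,s,t)$ with $t\ge 1$, its first column is balanced, so exactly $r/s$ rows of $A$ carry the symbol $0$ in coordinate one; let $A'$ be the $(r/s)\times n$ submatrix formed by those rows, and let $A''$ be obtained from $A'$ by deleting column $1$, so $A''$ has size $(r/s)\times(n-1)$, as claimed. For the strength, pick any $t-1$ columns $j_1<\cdots<j_{t-1}$ of $A''$; they correspond to $t-1$ columns of $A$, none equal to column $1$, so the columns $1,j_1,\dots,j_{t-1}$ form a $t$-subset of the columns of $A$. By strength $t$, every vector $(c_0,c_1,\dots,c_{t-1})\in\mathbb{Z}_s^{\,t}$ occurs in this $t$-column subarray exactly $\lambda=r/s^{\,t}$ times. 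Specializing to $c_0=0$: each $(0,c_1,\dots,c_{t-1})$ occurs exactly $\lambda$ times, and these occurrences are precisely among the rows of $A'$; deleting the (now constant) first coordinate shows that in the chosen $t-1$ columns of $A''$ every element of $\mathbb{Z}_s^{\,t-1}$ appears exactly $\lambda=(r/s)/s^{\,t-1}$ times. As the columns were arbitrary, $A''$ is an $OA(r/s,n-1,s,t-1)$ of index $\lambda$.

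The displayed diagram records the companion observation that sorting the rows of $A$ by the symbol in column $1$ splits $A$ into $s$ blocks of $r/s$ rows, and repeating the argument above with $0$ replaced by each symbol in turn shows that every block, after deletion of column $1$, is again an $OA(r/s,n-1,s,t-1)$. There is no genuine obstacle here; the only point deserving care is the bookkeeping — that a $(t-1)$-subset of columns of $A''$ augmented by column $1$ is still only a $t$-subset of columns of $A$, so strength $t$ (and no more) is exactly what is needed, and that the index is inherited unchanged as $\lambda=r/s^{\,t}$ rather than inflated. Once the counting is organized this way, the conclusion is immediate.
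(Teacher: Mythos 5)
The paper gives no proof of this lemma; it is simply quoted from the reference \cite{hss}, and your argument is exactly the standard one used there: adjoin column $1$ to any $t-1$ surviving columns, apply the strength-$t$ condition, and specialize the first coordinate. Your counting, including the observation that the index $\lambda=r/s^{t}$ is inherited unchanged, is correct and complete, so there is nothing to reconcile with the paper.
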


\section{New $m$-QMDS code from OAs}
In this section, we construct several infinite classes of $m$-QMDS codes over mixed alphabets from asymmetrical OAs. Compared with the NQMDS codes $((2d+1,1,d+1))_{s^{2d}2^1}$ in \cite{chao24}, the constructed codes have higher flexibility in length, dimension of the encoding state and alphabet sizes.

By using difference schemes and the expansive replacement method, we have the following result.

\begin{theorem}\label{5s2}
An $(s-1)$-QMDS code $((4+n,1,3))_{(s^2)^1 s^3 s_1^1 s_2^1\cdots s_n^1}$ exists for $s\geq 2$ and $s=s_1s_2\cdots s_n$.
\end{theorem}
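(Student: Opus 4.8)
The plan is to obtain the code from Lemma \ref{ym}, so the real work is to build a good orthogonal array. I would construct an $OA(s^3,4+n,(s^2)^1 s^3 s_1^1 s_2^1\cdots s_n^1,2)$ whose minimal distance is at least $3$; equipping it with the trivial one‑block orthogonal partition (whose strength is then $t'=2$) and applying Lemma \ref{ym} with $t=2$ and $h\ge 3$ yields a $((4+n,1,\min\{2+1,h\}))=((4+n,1,3))$ QECC. To certify that this code is $(s-1)$‑QMDS, note that the minimum distance being $3$ means $d=2$, and since $n\ge 1$ we have $(4+n)-2d=n$; as each $s_i$ divides $s$ and $s\ge 2$, the product $\prod_{j\in C}s_j$ over $n$‑subsets $C$ is minimized by discarding the columns of sizes $s^2,s,s,s$, giving $\min\{\prod_{j\in C}s_j:|C|=n\}=s_1s_2\cdots s_n=s$. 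Since $K=1=s-(s-1)$, Definition \ref{mQMDS} applies.

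For the array, I would begin with a difference scheme $D:=D_3(s^2,4,s)$ over $\mathbb{Z}_s$, which exists for every integer $s\ge 2$ by Lemma \ref{chazhen}(ii). By Lemma \ref{chazhen3}, $A':=D\oplus\textbf{(s)}$ is an $OA(s^3,4,s,3)$, and by Lemma \ref{zuixiaojuli} its minimal distance is $4-3+1=2$; in particular its $s^3$ rows are pairwise distinct. Index the rows of $A'$ by pairs $(j,w)$, where $j\in\{0,1,\dots,s^2-1\}$ labels a row of $D$ and $w\in\mathbb{Z}_s$, so that the entry of $A'$ in column $i$ of row $(j,w)$ is $d_{ji}+w$. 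Adjoin a new column $C_0$ carrying the symbol $j$ in row $(j,w)$; this gives a matrix $A''$ with one column on $s^2$ levels and four columns on $s$ levels. Finally, using the expansive replacement method (Lemma \ref{kuozhang}) I would replace the last $s$‑level column of $A''$ by the $s\times n$ mixed‑radix array $B$ associated with the factorization $s=s_1\cdots s_n$; since $s/(s_is_j)=\prod_{k\ne i,j}s_k$ is an integer, $B$ is an $OA(s,n,s_1^1\cdots s_n^1,2)$ (in fact of strength $n$), so the output is an $OA(s^3,4+n,(s^2)^1 s^3 s_1^1\cdots s_n^1,2)$.

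It then remains to verify strength and distance. First, $A''$ is an $OA(s^3,5,(s^2)^1 s^4,2)$: the restriction of $A''$ to the four inherited columns is $A'$, which has strength $3$, while for each inherited column $C_i$, freezing $j$ lets $d_{ji}+w$ run through all of $\mathbb{Z}_s$ as $w$ varies, so each value‑pair of $(C_0,C_i)$ occurs exactly once; hence every pair of columns of $A''$ is balanced. Second, $MD(A'')\ge 3$: two rows with $j=j'$ agree on $C_0$ but differ on all four inherited columns, so are at distance $4$; two rows with $j\ne j'$ differ on $C_0$ and, being distinct rows of $A'$, differ on at least $MD(A')=2$ of the inherited columns, so are at distance at least $3$. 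Third, since $B$ has pairwise distinct rows, the expansive replacement cannot decrease the minimal distance: two rows of $A''$ that agree on the replaced column still agree on all $n$ new columns, while two that disagree there are sent to distinct rows of $B$ and so still differ somewhere among the new columns. Thus the final array has minimal distance at least $MD(A'')\ge 3$, and Lemma \ref{ym} delivers the code.

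The step I expect to be most delicate is the claim $MD(A'')\ge 3$: this is exactly where the strength‑$3$ hypothesis on $D$ is needed, entering through $MD(A')=2$, which forces any two rows of $A''$ with distinct index entries already to differ in at least two of the inherited columns. The strength‑$2$ check for $A''$, the stability of $MD$ under expansive replacement, the $m$‑QMDS bookkeeping, and the invocation of Lemma \ref{ym} are then routine.
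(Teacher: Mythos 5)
Your proposal is correct and follows essentially the same route as the paper: build $A=D_3(s^2,4,s)\oplus\textbf{(s)}$, prepend the $s^2$-level index column to get an $OA(s^3,5,(s^2)^1s^4,2)$ with minimal distance $3$ (distance $4$ within a block, $1+2=3$ across blocks via $MD(A)=2$), replace one $s$-level column by the full-strength $OA(s,n,s_1^1\cdots s_n^1,n)$, and invoke Lemma \ref{ym} with Definition \ref{mQMDS}. You are somewhat more explicit than the paper about verifying strength $2$ of the five-column array, the preservation of the minimal distance under expansive replacement, and the Singleton-bound bookkeeping, but these are elaborations of the same argument rather than a different one.
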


\begin{proof}
By Lemma \ref{chazhen} (ii), there exists a difference matrix $D=D_3(s^2,4,s)$ for $s\geq2$.
Then $A=D\oplus\textbf{(s)}$ is an $OA(s^3,4,s,3)$ with $MD(A)=2$ from Lemmas \ref{chazhen3} and \ref{zuixiaojuli}. We construct an array $B=OA(s^3,5,(s^2)^1 s^4,2)$ as follows.
\begin{align*}
B=(\textbf{(s}^\textbf{2}\textbf{)}\otimes \textbf{1}_s,D\oplus\textbf{(s)})
=\left(\begin{array}{cc}
0\otimes \textbf{1}_s&d_1\oplus\textbf{(s)}\\
1\otimes \textbf{1}_s&d_2\oplus\textbf{(s)}\\
\vdots&\vdots\\
(s^2-1)\otimes \textbf{1}_s&d_{s^2}\oplus\textbf{(s)}\\
\end{array}\right)=\left(\begin{array}{c}
B_1\\
B_2\\
\vdots\\
B_{s^2}\\
\end{array}\right)
\end{align*}
where $d_i$ is the $i$ row of $D$ for $1\leq i\leq s^2$.
Apparently, $MD(B_i)=4$, $MD(B_{i_1},B_{i_2})=1+2=3$ for $1\leq i_1 < i_2 \leq s^2$ from $MD(A)=2$. Then $MD(B)=3$. An $OA(s^3,4+n,(s^2)^1 s^3 s_1^1s_2^1\cdots s_n^1,2)$ with $MD=3$ for $s=s_1s_2\cdots s_n$ and $s\geq4$ can be obtained after an $s$-level column of $B$ is replaced by an $OA(s,n,s_1^1 s_2^1 \cdots s_n^1,n)$ from Lemma \ref{kuozhang}. Therefore, an $(s-1)$-QMDS code $((4+n,1,3))_{(s^2)^1 s^3 s_1^1 s_2^1\cdots s_n^1}$ exists for $s\geq 2$ and $s=s_1s_2\cdots s_n$ from Lemma \ref{ym} and Definition \ref{mQMDS}.
\end{proof}

Furthermore, by using difference schemes, multiplication of OAs and the expansive replacement method, we have the following result.
\begin{theorem}\label{52s}
An $(s-1)$-QMDS code $((4+n,1,3))_{(2s)^1 s^3 s_1^1 s_2^1 \cdots s_n^1}$ exists for $s\geq 2$ and $s=s_1s_2\cdots s_n$.
\end{theorem}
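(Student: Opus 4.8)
The plan is to imitate the proof of Theorem~\ref{5s2}, replacing the $(s^2)$-level column by a $(2s)$-level column $\textbf{(2s)}=(0,1,\dots,2s-1)^T$ built from a difference scheme on $2s$ rows. Suppose first that $s$ is a prime power. By Lemma~\ref{chazhen}(i) there is a difference scheme $D(2s,2s,s)$; keeping only four of its columns gives a difference scheme $D'=D(2s,4,s)$, so by Lemma~\ref{chazhen3} the array $A=D'\oplus\textbf{(s)}$ is an $OA(2s^2,4,s,2)$. Form $B=(\textbf{(2s)}\otimes\textbf{1}_s,\ D'\oplus\textbf{(s)})$ exactly as $B$ was formed in Theorem~\ref{5s2}, so that $B$ consists of $2s$ row-blocks $B_1,\dots,B_{2s}$, the $i$th being $((i-1)\textbf{1}_s,\ d'_i\oplus\textbf{(s)})$ with $d'_i$ the $i$th row of $D'$. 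Strength $2$ follows verbatim from the Theorem~\ref{5s2} argument: the first column paired with any other runs over $\mathbb{Z}_{2s}\times\mathbb{Z}_s$ uniformly because within each block $d'_i\oplus\textbf{(s)}$ hits every residue once, while any two trailing columns are two columns of $A$, hence orthogonal; thus $B=OA(2s^2,5,(2s)^1s^4,2)$.

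For the minimal distance, two distinct rows in the same block differ in all four trailing columns, so $MD(B_i)=4$, while two rows from distinct blocks differ in the first column and in at least $MD(A)$ trailing columns, so $MD(B)\ge 1+MD(A)$; the point is to choose the four columns of $D(2s,2s,s)$ so that $MD(A)=MD(D'\oplus\textbf{(s)})\ge 2$, which then gives $MD(B)\ge 3$. Concretely, for a prime power $s\ge 4$ one may take $D'$ to be two copies of the four-column slice $(0,\theta_1k,\theta_2k,\theta_3k)_{k\in\mathbb{F}_s}$ of the $\mathbb{F}_s$-multiplication table (with $\theta_1,\theta_2,\theta_3$ distinct and nonzero), the second copy translated by a vector $v\in\mathbb{F}_s^4$; the differences of rows of $D'\oplus\textbf{(s)}$ lying in a single copy run through a two-dimensional subspace of minimum weight $3$, and since $|\mathbb{F}_s^4|$ exceeds the union of the radius-$1$ balls around that $s^2$-point subspace when $s\ge 4$, an admissible $v$ exists ruling out distance-$1$ coincidences. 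The cases $s=2$ and $s=3$ are treated separately: $s=2$ is Theorem~\ref{5s2} itself since then $2s=s^2=4$, and $s=3$ needs an explicit $OA(\cdot,5,6^1 3^4,2)$ with $MD\ge 3$.

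For a general integer $s\ge 2$, write $s=q_1q_2\cdots q_l$ with each $q_j$ a prime power. The prime-power construction provides an $OA(\cdot,5,(2q_1)^1q_1^4,2)$ with $MD\ge 3$, and for $j\ge 2$ five columns of the index-unity array $OA(q_j^2,q_j+1,q_j,2)$ of Lemma~\ref{bush}(i) give, by Lemma~\ref{zuixiaojuli}, an $OA(\cdot,5,q_j^5,2)$ with $MD=4$ (the small cases $q_j\in\{2,3\}$ again handled by explicit arrays). Multiplying all of these by Lemma~\ref{chengtui} yields an $OA(r,5,(2s)^1s^4,2)$ with $MD\ge 3$ for every $s\ge 2$. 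Replacing one $s$-level column by the full factorial $OA(s,n,s_1^1s_2^1\cdots s_n^1,n)$ via Lemma~\ref{kuozhang} preserves strength $2$ and leaves $MD\ge 3$, producing $OA(r,4+n,(2s)^1s^3s_1^1s_2^1\cdots s_n^1,2)$ with $MD\ge 3$. Applying Lemma~\ref{ym} with the trivial orthogonal partition into one block (of strength $t'=2$) gives a $((4+n,1,3))_{(2s)^1s^3s_1^1s_2^1\cdots s_n^1}$ QECC. Finally, since the length is $4+n$ and $d=2$, we have $(4+n)-2d=n$, and the least product of $n$ of the alphabet sizes $2s,s,s,s,s_1,\dots,s_n$ is obtained by discarding $2s$ and the three $s$'s, namely $\prod_{i=1}^n s_i=s$; as $K=1=s-(s-1)$, the code is $(s-1)$-QMDS by Definition~\ref{mQMDS}.

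The main obstacle is exactly the distance claim $MD(D'\oplus\textbf{(s)})\ge 2$ in the prime-power step. In Theorem~\ref{5s2} the analogous array $D_3(s^2,4,s)\oplus\textbf{(s)}$ is an $OA(s^3,4,s,3)$ of index unity, so Lemma~\ref{zuixiaojuli} forces $MD=2$ without effort; here $OA(2s^2,4,s,2)$ is not of index unity, so the bound has to be extracted from the structure of the chosen difference scheme, and one must verify that no coset of the all-ones vector occurs exactly three times in any difference of two of its rows. This is routine for $s\ge 4$ by the counting argument above, but the exceptional small alphabets ($s=3$ directly, and the prime-power components $2$ and $3$ entering through the multiplication) require separate, explicit constructions.
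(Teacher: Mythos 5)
Your overall architecture matches the paper's: start from the difference scheme $D(2s,2s,s)$ of Lemma \ref{chazhen}(i), prepend a $(2s)$-level column, reduce to five columns with $MD\geq 3$, extend to composite $s$ by factoring into prime powers and multiplying via Lemma \ref{chengtui}, and finish with Lemmas \ref{kuozhang} and \ref{ym}; your verification of the quantum Singleton bound and of $m=s-1$ is also the same. Where you diverge is the key distance bound for prime powers. You truncate the difference scheme to four columns first and are then forced to build a special $D(2s,4,s)$ (two translated cosets of a multiplication-table slice) and run a sphere-covering count to rule out distance-$1$ coincidences, which works only for $s\geq 4$ and leaves $s=3$ to an unspecified ``explicit array.'' The paper avoids all of this: it keeps the full saturated array $A=(\textbf{(2s)}\otimes\textbf{1}_s,\,D(2s,2s,s)\oplus\textbf{(s)})=OA(2s^2,2s+1,(2s)^1s^{2s},2)$ and applies Lemma \ref{zx2}, which gives $HD(A)=\{2s-1,2s\}$ outright; deleting $2s-4$ of the $s$-level columns can lower any pairwise distance by at most $2s-4$, so $MD\geq 3$ for every prime power $s\geq 2$ uniformly, with no case split and no ad hoc choice of columns. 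You should use Lemma \ref{zx2} here --- it is stated in the preliminaries precisely for this purpose.

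There is also a genuine gap in your composite case. For the factors $q_j$ with $j\geq 2$ you propose taking five columns of $OA(q_j^2,q_j+1,q_j,2)$, which requires $q_j\geq 4$; you defer $q_j\in\{2,3\}$ to ``explicit arrays,'' but no $OA(9,5,3,2)$ exists (Lemma \ref{bush} caps the number of columns at $q_j+1=4$), so when $3\,\|\,s$ the factor $3$ cannot be supplied as a symmetric five-column ingredient and must instead be absorbed into the $(2q)$-level component, e.g.\ as $OA(18,5,6^13^4,2)$. Worse, when both $2\,\|\,s$ and $3\,\|\,s$ (e.g.\ $s=6,30,42,\dots$) neither prime can carry a five-column symmetric array and both cannot simultaneously be the ``doubled'' component, so the paper has to import the explicit asymmetric array $OA(72,5,12^16^4,2)$ from the table in \cite{cz} to cover the combined factor $6$. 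Your proposal never identifies that this specific external ingredient is needed, and without it the multiplication step of Lemma \ref{chengtui} cannot be carried out for those values of $s$; this is the one place where the theorem genuinely does not follow from the lemmas listed in Section~2 alone.
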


\begin{proof}
By Lemma \ref{chazhen} (i), there exists a difference matrix $D(2s,2s,s)$ for any prime power $s$. Then there exists an $A=(\textbf{(2s)}\otimes \textbf{1}_s,D(2s,2s,s)\oplus\textbf{(s)})$ is a saturated $OA(2s^2,2s+1,(2s)^1 s^{2s},2)$. We have $MD(A)=2s-1$ from Lemma \ref{zx2}. For any prime power $s$, we can obtain an $OA(2s^2,5,(2s)^1 s^4,2)$ with $MD=3$ by deleting $2s-4$ $s$-level columns from $A$. Now let's prove there exists an $OA(2s^2,5,(2s)^1 s^4,2)$ with $MD=3$ for
$s=u_1^{l_1}u_2^{l_2}\cdots u_v^{l_v}$ where $u_1,u_2,\ldots,u_v$ are all primes, $v\geq2$, $2\leq u_1<u_2<\cdots<u_v\leq s$ and $l_1,l_2,\ldots,l_v\geq 1$. We consider the following three cases:

(1) When $u_1> 2$, there exists an $OA(2(u_1^{l_1})^2,5,(2u_1^{l_1})^1 (u_1^{l_1})^4,2)$ with $MD=3$.
There exists an $OA((u_j^{l_j})^2,5,u_j^{l_j},2)$ with $MD\geq3$ for $2\leq j\leq v$ from Lemmas \ref{bush} and \ref{zuixiaojuli}. By Lemma \ref{chengtui}, there exists an $OA(2s^2,5,(2s)^1 s^4,2)$ with $MD=3$.

(2) When $u_1=2$ and $u_2=3$, we need to prove that there exists an $OA(2s^2,5,(2s)^1 s^4,2)$ with $MD=3$. We consider the following three cases:

(a) When $l_1=l_2=1$, there exists an $OA(72,5,12^1 6^4,2)$ with $MD=3$ in \cite{cz} and an $OA((u_j^{l_j})^2,5,u_j^{l_j},2)$ with $MD\geq3$ for $3\leq j\leq v$ from Lemmas \ref{bush} and \ref{zuixiaojuli}.
(b) When $l_1=1$ and $l_2\geq2$, there exists an $OA(8,5,4^1 2^4,2)$ with $MD=3$ and an $OA((u_j^{l_j})^2,5,u_j^{l_j},2)$ with $MD\geq3$ for $2\leq j\leq v$ from Lemmas \ref{bush} and \ref{zuixiaojuli}.
(c) When $l_1\geq2$ and $l_2=1$, there exists an $OA(2u_2^2,5,(2u_2)^1 u_2^4,2)=OA(18,5,6^1 3^4,2)$ with $MD=3$ and an $OA((u_j^{l_j})^2,5,u_j^{l_j},2)$ with $MD\geq3$ for $j=1$ and $3\leq j\leq v$ from Lemmas \ref{bush} and \ref{zuixiaojuli}. Then there exists an $OA(2s^2,5,(2s)^1 s^4,2)$ with $MD=3$ from Lemma \ref{chengtui}.

(3) When $u_1=2$ and $u_2>3$, there exists an $OA(2(u_1^{l_1})^2,5,(2u_1^{l_1})^1 (u_1^{l_1})^4,2)$
with $MD=3$ for $l_1\geq1$  and an $OA((u_j^{l_j})^2,5,u_j^{l_j},2)$ with $MD\geq3$ for $2\leq j\leq v$ from Lemmas \ref{bush} and \ref{zuixiaojuli}. By Lemma \ref{chengtui}, there exists an $OA(2s^2,5,(2s)^1 s^4,2)$ with $MD=3$.

So an $OA(2s^2,4+n,(2s)^1 s^3 s_1^1s_2^1\cdots s_n^1,2)$ with $MD=3$ for $s=s_1s_2\cdots s_n$ and $s\geq2$ can be obtained after an $s$-level column of $OA(2s^2,5,(2s)^1 s^4,2)$ is replaced by an $OA(s,n,s_1^1 s_2^1 \cdots s_n^1,n)$ from Lemma \ref{kuozhang}. Therefore, an $(s-1)$-QMDS code $((4+n,1,3))_{(2s)^1 s^3 s_1^1 s_2^1\cdots s_n^1}$ exists for $s\geq 2$ and $s=s_1s_2\cdots s_n$ from Lemma \ref{ym} and Definition \ref{mQMDS}.
\end{proof}

Table I lists plenty of $m$-QMDS codes including many infinite classes of such codes constructed by Theorems \ref{5s2} and \ref{52s}.

\noindent {TABLE I.} New $m$-QMDS codes $((4+n,1,3))_{(s^2)^1 s^3 s_1^1  s_2^1\cdots s_n^1}$ and $((4+n,1,3))_{(2s)^1 s^3 s_1^1  s_2^1\cdots s_n^1}$ with $m=s-1$ from Theorems \ref{5s2} and \ref{52s}, respectively.
\begin{center}
\vspace{-7mm}
\renewcommand\arraystretch{1}
\setlength{\tabcolsep}{7pt}
\begin{longtable}{lllll}
\label{Table} \\
\hline\hline $s$&$s_1s_2\cdots s_n$&$((4+n,1,3))_{(s^2)^1 s^3 s_1^1 \cdots s_n^1}$ by Th1&$((4+n,1,3))_{(2s)^1 s^3 s_1^1 \cdots s_n^1}$ by Th2&$m$\\  \hline
\endfirsthead
\multicolumn{5}{c}
{{\bfseries }} \\
\hline\hline $s$&$s_1s_2\cdots s_n$&$((4+n,1,3))_{(s^2)^1 s^3 s_1^1 \cdots s_n^1}$ by Th1&$((4+n,1,3))_{(2s)^1 s^3 s_1^1 \cdots s_n^1}$ by Th2&$m$\\  \hline
\endhead
\hline \\
\endfoot
\endlastfoot
2&$2$&$((5,1,3))_{4^1 2^4}$&$((5,1,3))_{4^1 2^4}$&1\\
3&$3$&$((5,1,3))_{9^1 3^4}$&$((5,1,3))_{6^1 3^4}$&2\\
4&$4$&$((5,1,3))_{16^1 4^4}$&$((5,1,3))_{8^1 4^4}$&3\\
4&$2\times2$&$((6,1,3))_{16^1 4^3 2^2}$&$((6,1,3))_{8^1 4^3 2^2}$&3\\
5&$5$&$((5,1,3))_{25^1 5^4}$&$((5,1,3))_{10^1 5^4}$&4\\
6&$6$&$((5,1,3))_{36^1 6^4}$&$((5,1,3))_{12^1 6^4}$&5\\
6&$2\times3$&$((6,1,3))_{36^1 6^3 3^1 2^1}$&$((6,1,3))_{12^1 6^3 3^1 2^1}$&5\\
7&$7$&$((5,1,3))_{49^1 7^4}$&$((5,1,3))_{14^1 7^4}$&6\\
8&$8$&$((5,1,3))_{64^1 8^4}$&$((5,1,3))_{16^1 8^4}$&7\\
8&$2\times4$&$((6,1,3))_{64^1 8^3 4^1 2^1}$&$((6,1,3))_{16^1 8^3 4^1 2^1}$&7\\
8&$2\times2\times2$&$((7,1,3))_{64^1 8^3 2^3}$&$((7,1,3))_{16^1 8^3 2^3}$&7\\
9&$9$&$((5,1,3))_{81^1 9^4}$&$((5,1,3))_{18^1 9^4}$&8\\
9&$3\times3$&$((6,1,3))_{81^1 9^3 3^2}$&$((6,1,3))_{18^1 9^3 3^2}$&8\\

10&$10$&$((5,1,3))_{100^1 10^4}$&$((5,1,3))_{20^1 10^4}$&9\\
10&$2\times5$&$((6,1,3))_{100^1 10^3 5^1 2^1}$&$((6,1,3))_{20^1 10^3 5^1 2^1}$&9\\
11&$11$&$((5,1,3))_{121^1 11^4}$&$((5,1,3))_{22^1 11^4}$&10\\
12&$12$&$((5,1,3))_{144^1 12^4}$&$((5,1,3))_{24^1 12^4}$&11\\
12&$2\times6$&$((6,1,3))_{144^1 12^3 6^1 2^1}$&$((6,1,3))_{24^1 12^3 6^1 2^1}$&11\\
12&$3\times4$&$((6,1,3))_{144^1 12^3 4^1 3^1}$&$((6,1,3))_{24^1 12^3 4^1 3^1}$&11\\
$s\geq 13$&$s_1 s_2 \cdots s_n$&$((4+n,1,3))_{(s^2)^1 s^3 s_1^1 s_2^1 \cdots s_n^1}$&$((4+n,1,3))_{(2s)^1 s^3 s_1^1 s_2^1 \cdots s_n^1}$&$s-1$\\\hline\hline
\end{longtable}
\end{center}

\vspace{-10mm}
Theorems \ref{s1QMDS} and \ref{tn} are based on the existence of an OA.
\begin{theorem}\label{s1QMDS}
If there exists an $OA(s^d,2d,s,d)$, then there exists an $(s_1-1)$-QMDS code $((2d+1,1,d+1))_{s^{2d-1} (\frac{s}{s_1})^1 s_1^1}$ for $s_1|s$ and $s\geq s_1^2$.
\end{theorem}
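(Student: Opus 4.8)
The plan is to mirror the structure of Theorems \ref{5s2} and \ref{52s}: start from the hypothesized array, enlarge it to an asymmetrical OA with the right alphabet profile and minimum distance, exhibit an orthogonal partition of the right strength, and then invoke Lemma \ref{ym} together with Definition \ref{mQMDS}. Concretely, I would first take the given $OA(s^d,2d,s,d)$, call it $A_0$. By Lemma \ref{zuixiaojuli}, $MD(A_0)=2d-(d)+1=d+1$. The goal distance for an $((2d+1,1,d+1))$ code is $d+1$, so I want an asymmetrical OA on $2d+1$ columns whose minimal distance is still at least $d+1$ and which carries an orthogonal partition of strength $d$ into $K=\min\{\prod_{j\in C}s_j\} - (s_1-1)$ blocks. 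Here, since the alphabet profile is $s^{2d-1}(s/s_1)^1 s_1^1$ and $n-2d=1$, the minimum single-coordinate alphabet size is $\min\{s,\ldots,s,s/s_1,s_1\}=s_1$ (using $s_1^2\le s$, so $s_1\le s/s_1\le s$), hence the Singleton value is $s_1$ and $K=s_1-(s_1-1)=1$. So I actually need an $((2d+1,1,d+1))$ code, i.e.\ a pure (uniform) state, which is exactly an OA of strength $d$ on $2d+1$ columns with the mixed profile; the ``$m$-QMDS with $m=s_1-1$'' claim is then immediate from the definition once the code exists.

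The construction step: I would adjoin one extra column to $A_0$ to go from $2d$ to $2d+1$ columns, then split two of the $s$-level columns' alphabets. For the extra column, the natural move is to use a difference-scheme / Kronecker-sum trick as in the earlier theorems — e.g.\ if $A_0 = D\oplus\textbf{(s)}$ for a suitable difference scheme $D_d$ of strength $d$, prepend $\textbf{(s)}\otimes\textbf{1}_{\cdot}$ or an analogous level-indexing column so that the new $OA(s^d\cdot\text{(something)},2d+1,s,d)$ has $MD=d+1$; alternatively, invoke Lemma \ref{qu1} in reverse, or simply note that an $OA(s^{d+1},2d+2,s,d+1)$ (if it exists by iterating the hypothesis) restricts to the needed object. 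Once I have an $OA(r,2d+1,s,d)$ with $MD=d+1$, I apply the expansive replacement method (Lemma \ref{kuozhang}) twice: replace one $s$-level column by an $OA(s,2,(s/s_1)^1 s_1^1,2)$ — which exists trivially as the $s\times 2$ array listing all pairs, since $(s/s_1)\cdot s_1=s$ — splitting that coordinate into an $(s/s_1)$-level and an $s_1$-level column. This keeps strength $d$ and, because the replacement array has full strength $2$ on $2$ columns, it does not decrease $MD$ below $d+1$ (a change in the original $s$-level column forces a change in at least one of the two new columns). That yields an $OA(r,2d+2,s^{2d-1}(s/s_1)^1 s_1^1,d)$ — wait, the column count should be $2d+1$: I replace one of the $2d+1$ $s$-columns, producing $2d$ columns at level $s$ plus two new ones, total $2d+2$; to land at $2d+1$ I instead split while deleting, i.e.\ I start from $OA(r,2d,s,d)$, append \emph{no} column but split one $s$-column into $(s/s_1)^1 s_1^1$, giving $2d+1$ columns, profile $s^{2d-1}(s/s_1)^1 s_1^1$, strength $d$, and $MD\ge d+1$; since it has $2d+1$ columns and strength $d$ with $2d+1-d = d+1 = MD$, it is already MDS-like.

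For the orthogonal partition: an $OA(s^d,2d,s,d)$ can be partitioned into $s$ sub-blocks according to the value of the first column, and by Lemma \ref{qu1} each block is an $OA(s^{d-1},2d,s,d-1)$ — actually Lemma \ref{qu1} drops a column, so more care is needed; instead I would partition according to a coordinate I control, or use the Kronecker-sum structure $D_d\oplus\textbf{(s)}$ whose natural $\textbf{(s)}$-indexed row grouping gives $s$ copies of an $OA$ of strength $\ge d-1$. I need strength $t_1$ with $\min\{t_1+1,\,d+1\}=d+1$, i.e.\ $t_1\ge d$, so the partition must itself have full strength $d$ — this forces $K\cdot s^d$-type counting and is exactly the rigidity that pins $K=1$. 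Thus I take the trivial partition into $K=1$ block (the whole array), whose strength is $d$ trivially, apply Lemma \ref{ym} with $h=d+1$ and $t'=d$ to get the $((2d+1,1,d+1))_{s^{2d-1}(s/s_1)^1 s_1^1}$ QECC, and finally note $\min\{\prod_{j\in C}s_j : |C|=1\} = s_1$ so $K = 1 = s_1 - (s_1-1)$, i.e.\ the code is $(s_1-1)$-QMDS by Definition \ref{mQMDS}. The main obstacle I anticipate is the bookkeeping in the splitting step — verifying that replacing one $s$-level column by the $2$-column array $OA(s,2,(s/s_1)^1 s_1^1,2)$ genuinely preserves both strength $d$ (which needs the replacement array to have strength $\ge d$ on its columns, true since any $2$ of its $2$ columns already determine it and $d\ge 2$ requires checking the full-factorial structure interacts correctly with the ambient strength) and $MD \ge d+1$; the condition $s_1^2\le s$ is presumably what guarantees $s_1\le s/s_1$ so that $s_1$ is indeed the minimum alphabet size and the Singleton value is as claimed.
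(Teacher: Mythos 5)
Your proposal, after some false starts about adjoining an extra column, lands on exactly the paper's construction: take the hypothesized $OA(s^d,2d,s,d)$ with $MD=d+1$ (Lemma \ref{zuixiaojuli}), replace one $s$-level column by the full-factorial $OA(s,2,(\frac{s}{s_1})^1 s_1^1,2)$ via Lemma \ref{kuozhang} to get an $OA(s^d,2d+1,s^{2d-1}(\frac{s}{s_1})^1 s_1^1,d)$ with $MD=d+1$, and then apply Lemma \ref{ym} with the trivial partition and Definition \ref{mQMDS}, using $s_1^2\le s$ to see the Singleton value is $s_1$ and hence $m=s_1-1$. This is essentially the same proof as the paper's, with the added (correct) justification of why the splitting step preserves the minimum distance.
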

\begin{proof}
By Lemma \ref{zuixiaojuli}, the minimal distance of $OA(s^d,2d,s,d)$ is $d+1$, We can obtain an $A=OA(s^d,2d+1,s^{2d-1} (\frac{s}{s_1})^1 s_1^1,d)$ with $MD(A)=d+1$ after an $s$-level column of $OA(s^d,2d,s,d)$ is replaced by an $OA(s,2,(\frac{s}{s_1})^1 s_1^1,2)$ from Lemma \ref{kuozhang}. By Lemma \ref{ym} and Definition \ref{mQMDS}, there exists an $(s_1-1)$-QMDS code $((2d+1,1,d+1))_{s^{2d-1} (\frac{s}{s_1})^1 s_1^1}$.
\end{proof}
The following corollary follows immediately from Theorem \ref{s1QMDS}.
\begin{corollary}\label{1wei}
If $s_1|s$ and $s\geq s_1^2$, then

(i) an $(s_1-1)$-QMDS code $((3,1,2))_{s^1 (\frac{s}{s_1})^1 s_1^1}$ exists for $s\geq4$.

(ii) an $(s_1-1)$-QMDS code $((5,1,3))_{s^3 (\frac{s}{s_1})^1 s_1^1}$ exists for $s\geq4$ and $s\neq 6$.

(iii) an $(s_1-1)$-QMDS code $((7,1,4))_{s^5 (\frac{s}{s_1})^1 s_1^1}$ exists with  $s=u_1u_2\cdots u_v$ for prime powers $u_1,u_2,\ldots,u_v$ and $min\{u_1,u_2,\ldots,u_v\}\geq 4$.

(iv) an $(s_1-1)$-QMDS code $((2d+1,1,d+1))_{s^{2d-1} (\frac{s}{s_1})^1 (s_1)^1}$ exists with $d\geq 4$ and that $s=u_1u_2\cdots u_v$ for prime powers $u_1,u_2,\ldots,u_v$ and $min\{u_1,u_2,\ldots,u_v\}\geq 2d-1$. .
\end{corollary}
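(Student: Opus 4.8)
The plan is to derive Corollary \ref{1wei} as a sequence of direct applications of Theorem \ref{s1QMDS}, each time supplying the required $OA(s^d,2d,s,d)$ from the existence results already collected in the preliminaries. The common pattern is: first exhibit an orthogonal array of strength $d$, index unity, with $2d$ columns over an alphabet of size $s$; then invoke Theorem \ref{s1QMDS} verbatim with the stated divisibility $s_1 \mid s$ and size condition $s \ge s_1^2$.

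For part (i), take $d=1$: an $OA(s,2,s,1)$ is trivial (for instance $\textbf{(s)}$ repeated, or any Latin-square-free strength-1 array), so Theorem \ref{s1QMDS} immediately yields the $((3,1,2))$ code; the constraint $s\ge 4$ just ensures $s\ge s_1^2$ has a nontrivial solution with $s_1\ge 2$. For part (ii), take $d=2$: I need an $OA(s^2,4,s,2)$. By Lemma \ref{chazhen}(ii) there is a difference matrix $D_3(s^2,4,s)$ for every $s\ge 2$, which is in particular a $D_2(s^2,4,s)$; then Lemma \ref{chazhen3} gives $D\oplus\textbf{(s)} = OA(s^3,4,s,2)$ — but that has $s^3$ rows, not $s^2$. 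So instead I would use Lemma \ref{bush}(i): for $s$ a prime power, an $OA(s^2,s+1,s,2)$ of index unity exists whenever $s\ge 1$, and $s+1\ge 4$ for $s\ge 3$; deleting columns gives $OA(s^2,4,s,2)$. For composite $s$ one builds the array via Lemma \ref{chengtui} (multiplication of OAs) from the prime-power factors, which is exactly why the exclusion $s\ne 6$ appears — $6=2\cdot 3$ and there is no $OA(36,4,6,2)$ of index unity with $MD=3$, paralleling the classical nonexistence of two MOLS of order $6$. For part (iii), take $d=3$: I need $OA(s^3,6,s,3)$, index unity. By Lemma \ref{bush}(i) this exists for prime power $s$ when $s\ge t-1 = 2$, giving $s+1\ge 6$, i.e. $s\ge 5$; the hypothesis $\min\{u_i\}\ge 4$ together with the multiplication lemma \ref{chengtui} covers composite $s$ (note $\min u_i\ge 4$ forces each $u_i\ge 4$, hence the needed per-factor arrays exist, and $MD$ is preserved as $\min$). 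For part (iv), take general $d\ge 4$ and the same Lemma \ref{bush}(i) with $t=d$: an $OA(s^d,s+1,s,d)$ of index unity exists when $s\ge d-1$, so $s+1\ge 2d$ requires $s\ge 2d-1$; the condition $\min\{u_i\}\ge 2d-1$ lets Lemma \ref{chengtui} assemble the array for $s=u_1\cdots u_v$ while keeping strength $d$ and $MD = s^d$-array value $=2d-1$ — wait, more precisely $MD = n - t + 1 = 2d - d + 1 = d+1$ by Lemma \ref{zuixiaojuli} — and then Theorem \ref{s1QMDS} finishes.

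The only real subtlety, and the step I would flag as the main obstacle, is the composite-$s$ case and the exact boundary conditions. For $s$ not a prime power Lemma \ref{bush} does not apply directly, so one must factor $s = u_1\cdots u_v$, take an index-unity $OA(u_i^d, 2d, u_i, d)$ for each prime-power factor (which forces $u_i \ge 2d-1$, explaining the hypothesis in (iii) and (iv)), and combine them with Lemma \ref{chengtui} to get $OA(s^d, 2d, s, d)$ with $MD = d+1 = \min_i(d+1)$. I must also double-check that after the expansive replacement of Lemma \ref{kuozhang} in Theorem \ref{s1QMDS} the minimal distance is still $d+1$ rather than dropping — but this is guaranteed because the replacing array $OA(s,2,(\frac{s}{s_1})^1 s_1^1,2)$ has $MD = 2 \ge 1$, so no two distinct rows of the ambient array that agreed only in the replaced column become closer. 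Finally, in (ii) I should explain the failure at $s=6$ explicitly: there is no $OA(36,4,6,2)$ (equivalently, no pair of MOLS$(6)$), whereas all other $s\ge 4$ admit $OA(s^2,4,s,2)$ either by Lemma \ref{bush}(i) for prime powers or by Lemma \ref{chengtui} for composites whose smallest prime-power factor is at least $3$; the value $s=4$ is fine because $4$ is a prime power.
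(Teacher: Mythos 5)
Your overall strategy is exactly the paper's: supply an $OA(s^d,2d,s,d)$ from Lemma \ref{bush} (prime powers) and Lemma \ref{chengtui} (composites), then invoke Theorem \ref{s1QMDS}. Parts (i) and (iv) match the paper's proof. But part (ii) has a genuine gap. Your route for composite $s$ is to factor $s$ into prime powers and multiply the factor arrays with Lemma \ref{chengtui}; this requires an $OA(u_i^2,4,u_i,2)$ for \emph{every} prime-power factor $u_i$, and no $OA(4,4,2,2)$ exists (a strength-2 two-level array with $4$ runs has at most $3$ columns). So every $s\equiv 2\pmod 4$ with $s\neq 2,6$ --- e.g.\ $s=10,14,18,22,\dots$ --- is left uncovered; indeed you concede this yourself by restricting to ``composites whose smallest prime-power factor is at least $3$,'' but then incorrectly assert that this, together with the prime-power case, exhausts all $s\geq 4$, $s\neq 6$. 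The paper closes exactly this hole differently: it cites an explicit $OA(100,4,10,2)$ for $s=10$ and appeals to the classical MOLS existence theory (an $OA(s^2,4,s,2)$ exists iff two mutually orthogonal Latin squares of order $s$ exist, and these exist for every non-prime-power $s\geq 12$), rather than relying on the Kronecker-type product. Your proposed explanation of the exclusion $s=6$ via the nonexistence of two MOLS of order $6$ is correct, but the same MOLS machinery is what you are missing for the orders $\equiv 2\pmod 4$.

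A secondary, smaller issue is in part (iii): you correctly note that Lemma \ref{bush}(i) yields $6$ columns only for prime powers $u_i\geq 5$, but then assert without justification that $\min\{u_i\}\geq 4$ suffices. The case $u_i=4$ needs Lemma \ref{bush}(ii) (the extra column available for $s=2^m$, $t=3$, giving $OA(64,6,4,3)$); the paper's proof implicitly uses both parts of Lemma \ref{bush} here. With these two repairs your argument coincides with the paper's.
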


\begin{proof}
(i) By Lemma \ref{bush}, there exists an $OA(s,2,s,1)$ for $s\geq4$.

(ii) From Lemma \ref{bush}, an $OA(s^2,4,s,2)$ exists for a prime power $s$. When $s=10$, we have an $OA(100,4,10,2)$ in \cite{cz}. From \cite{hss}, we have the following conclusions: (1) An $OA(s^2,n_1,s,2)$ exists if and only if $n_1-2$ pairwise orthogonal Latin squares of order $s$ exist. (2) There exist more than 2 pairwise orthogonal Latin squares of order $s\geq 12$ which is not a prime power. Then there exists an $OA(s^2,4,s,2)$ for $s\geq4$ and $s\neq 6$.

(iii) By Lemma \ref{bush}, there exists an $OA(u_i^3,6,u_i,3)$ with a prime power $u_i\geq4$ for $1\leq i\leq v$. By Lemma \ref{chengtui}, there exists an $OA(s^3,6,s,3)$ with $s=u_1u_2\cdots u_v$ for prime powers $u_1,u_2,\ldots,u_v$ and $min\{u_1,u_2,\ldots,u_v\}\geq 4$.

(iv) By Lemma \ref{bush}, there exists an $OA(u_i^d,2d,u_i,d)$ with a prime power $u_i\geq 2d-1$ for $1\leq i\leq v$. By Lemma \ref{chengtui}, there exists an $OA(s^d,2d,s,d)$ with $s=u_1u_2\cdots u_v$ for prime powers $u_1,u_2,\ldots,u_v$ and $min\{u_1,u_2,\ldots,u_v\}\geq 2d-1$.

It follows from Theorem \ref{s1QMDS} that the corollary holds.
\end{proof}

\noindent {TABLE II.} New 1-QMDS codes $((2d+1,1,d+1))_{s^{2d-1} (\frac{s}{2})^1 2^1}$ from Corollary \ref{1wei}.
\vspace{-7mm}
\renewcommand\arraystretch{1}
\setlength{\tabcolsep}{7pt}
\begin{longtable}{lll}
\label{Table} \\
\hline\hline $d$&$s$&$((2d+1,1,d+1))_{s^{2d-1} (\frac{s}{2})^1 2^1}$\\  \hline
\endfirsthead
\multicolumn{3}{c}
{{\bfseries }} \\
\hline\hline $d$&$s$&$((2d+1,1,d+1))_{s^{2d-1} (\frac{s}{2})^1 2^1}$\\  \hline
\endhead
\hline \\
\endfoot
\endlastfoot
1&4&$((3,1,2))_{4^1 2^2}$\\
1&6&$((3,1,2))_{6^1 3^1 2^1}$\\
1&8&$((3,1,2))_{8^1 4^1 2^1}$\\
1&10&$((3,1,2))_{10^1 5^1 2^1}$\\
1&12&$((3,1,2))_{12^1 6^1 2^1}$\\
1&14&$((3,1,2))_{14^1 7^1 2^1}$\\
1&$s=2k(k\geq8)$&$((3,1,2))_{s^1 (\frac{s}{2})^1 2^1}$\\\hline

2&4&$((5,1,3))_{4^3 2^2}$\\
2&8&$((5,1,3))_{8^3 4^1 2^1}$\\
2&10&$((5,1,3))_{10^3 5^1 2^1}$\\
2&12&$((5,1,3))_{12^3 6^1 2^1}$\\
2&16&$((5,1,3))_{16^3 8^1 2^1}$\\
2&18&$((5,1,3))_{18^3 9^1 2^1}$\\
2&$s=2k(k\geq10)$&$((5,1,3))_{s^3 (\frac{s}{2})^1 2^1}$\\\hline

3&8&$((7,1,4))_{8^5 4^1 2^1}$\\
3&16&$((7,1,4))_{16^5 8^1 2^1}$\\
3&20&$((7,1,4))_{20^5 10^1 2^1}$\\
3&28&$((7,1,4))_{28^5 14^1 2^1}$\\
3&32&$((7,1,4))_{32^5 16^1 2^1}$\\
3&36&$((7,1,4))_{36^5 18^1 2^1}$\\
3&40&$((7,1,4))_{40^5 20^1 2^1}$\\
3&44&$((7,1,4))_{44^5 22^1 2^1}$\\
3&52&$((7,1,4))_{52^5 26^1 2^1}$\\
3&56&$((7,1,4))_{56^5 28^1 2^1}$\\
3&\tabincell{l}{$s=4\times 2^p \times u_1\times u_2\times \cdots \times u_v$\\($p\geq0$, $u_i$ is a prime power and $u_i\geq 5$)}&$((7,1,4))_{s^5 (\frac{s}{2})^1 2^1}$\\\hline

4&8&$((9,1,5))_{8^7 4^1 2^1}$\\
4&16&$((9,1,5))_{16^7 8^1 2^1}$\\
4&32&$((9,1,5))_{32^7 16^1 2^1}$\\
4&56&$((9,1,5))_{56^7 28^1 2^1}$\\
4&64&$((9,1,5))_{64^7 32^1 2^1}$\\
4&72&$((9,1,5))_{72^7 36^1 2^1}$\\
4&88&$((9,1,5))_{88^7 44^1 2^1}$\\
\tabincell{l}{4 \\ }&\tabincell{l}{$s=8\times 2^p \times u_1\times u_2\times \cdots \times u_v$\\($p\geq0$, $u_i$ is a prime power and $u_i\geq 7$)}&\tabincell{l}{$((9,1,5))_{s^7 (\frac{s}{2})^1 2^1}$ \\ }\\\hline

5&16&$((11,1,6))_{16^9 8^1 2^1}$\\
5&32&$((11,1,6))_{32^9 16^1 2^1}$\\
5&64&$((11,1,6))_{64^9 32^1 2^1}$\\
5&128&$((11,1,6))_{128^9 64^1 2^1}$\\
5&144&$((11,1,6))_{144^9 72^1 2^1}$\\
5&\tabincell{l}{$s=16\times 2^p \times u_1\times u_2\times \cdots \times u_v$\\($p\geq0$, $u_i$ is a prime power and $u_i\geq 9$)}&$((11,1,6))_{s^7 (\frac{s}{2})^1 2^1}$\\\hline
$\cdots$&$\cdots$&$\cdots$\\\hline\hline
\end{longtable}

\vspace{-5mm}
\noindent {TABLE III.} New 2-QMDS codes $((2d+1,1,d+1))_{s^{2d-1} (\frac{s}{3})^1 3^1}$ from Corollary \ref{1wei}.
\vspace{-7mm}
\renewcommand\arraystretch{1}
\setlength{\tabcolsep}{7pt}
\begin{longtable}{lll}
\label{Table} \\
\hline\hline $d$&$s$&$((2d+1,1,d+1))_{s^{2d-1} (\frac{s}{3})^1 3^1}$\\  \hline
\endfirsthead
\multicolumn{3}{c}
{{\bfseries }} \\
\hline\hline $d$&$s$&$((2d+1,1,d+1))_{s^{2d-1} (\frac{s}{3})^1 3^1}$\\  \hline
\endhead
\hline \\
\endfoot
\endlastfoot
1&9&$((3,1,2))_{9^1 3^2}$\\
1&12&$((3,1,2))_{12^1 4^1 3^1}$\\
1&15&$((3,1,2))_{15^1 5^1 3^1}$\\
1&18&$((3,1,2))_{18^1 6^1 3^1}$\\
1&21&$((3,1,2))_{21^1 7^1 3^1}$\\
1&$s=3k(k\geq8)$&$((3,1,2))_{s^1 (\frac{s}{3})^1 3^1}$\\\hline

2&9&$((5,1,3))_{9^3 3^2}$\\
2&12&$((5,1,3))_{12^3 4^1 3^1}$\\
2&15&$((5,1,3))_{15^3 5^1 3^1}$\\
2&18&$((5,1,3))_{18^3 6^1 3^1}$\\
2&21&$((5,1,3))_{21^3 7^1 3^1}$\\
2&$s=3k(k\geq8)$&$((5,1,3))_{s^3 (\frac{s}{3})^1 3^1}$\\\hline

3&9&$((7,1,4))_{8^5 4^1 2^1}$\\
3&27&$((7,1,4))_{27^5 9^1 3^1}$\\
3&36&$((7,1,4))_{36^5 12^1 3^1}$\\
3&45&$((7,1,4))_{45^5 15^1 3^1}$\\
3&63&$((7,1,4))_{63^5 21^1 3^1}$\\
3&\tabincell{l}{$s=9\times 3^p \times u_1\times u_2\times \cdots \times u_v$\\($p\geq0$, $u_i$ is a prime power and $u_i\geq 4$)}&$((7,1,4))_{s^5 (\frac{s}{3})^1 3^1}$\\\hline

4&9&$((9,1,5))_{9^7 3^2}$\\
4&27&$((9,1,5))_{27^7 9^1 3^1}$\\
4&63&$((9,1,5))_{63^7 21^1 3^1}$\\
4&72&$((9,1,5))_{72^7 24^1 3^1}$\\
4&81&$((9,1,5))_{81^7 27^1 3^1}$\\
4&99&$((9,1,5))_{99^7 33^1 3^1}$\\
4&117&$((9,1,5))_{117^7 39^1 3^1}$\\
\tabincell{l}{4}&\tabincell{l}{$s=9\times 3^p \times u_1\times u_2\times \cdots \times u_v$\\($p\geq0$, $u_i$ is a prime power and $u_i\geq 7$)}&\tabincell{l}{$((9,1,5))_{s^7 (\frac{s}{2})^1 2^1}$}\\\hline

5&9&$((11,1,6))_{9^9 3^2}$\\
5&27&$((11,1,6))_{27^9 9^1 3^1}$\\
5&81&$((11,1,6))_{81^9 27^1 3^1}$\\
5&99&$((11,1,6))_{99^9 33^1 3^1}$\\
5&117&$((11,1,6))_{117^9 39^1 3^1}$\\
5&144&$((11,1,6))_{144^9 48^1 3^1}$\\
5&153&$((11,1,6))_{153^9 51^1 3^1}$\\
5&\tabincell{l}{$s=9\times 3^p \times u_1\times u_2\times \cdots \times u_v$\\($p\geq0$, $u_i$ is a prime power and $u_i\geq 9$)}&$((11,1,6))_{s^7 (\frac{s}{2})^1 2^1}$\\\hline
$\cdots$&$\cdots$&$\cdots$\\\hline\hline
\end{longtable}
\vspace{-5mm}

The following theorem implies that infinite classes of $m$-QMDS codes over mixed alphabets with more dimensions exist.
\begin{theorem}\label{tn}
Suppose an $OA(s^{d+l},2d+2l+1,s,d+l)$ exists for integer $d\geq1$ and that $s_1,s_2,\ldots,s_{n_1},q_1,q_2,\ldots,q_{n_2}\geq2$ are all integers. If $(s_1s_2\cdots s_{n_1})|s$ and $s=q_1 q_2 \cdots q_{n_2}$, then

(i) when $l\geq1$, there exist two $(s_1s_2\cdots s_{n_1}-1)s^l$-QMDS codes $((2d+l+n_1,s^l,d+1))_{s^{2d+l} s_1^1 s_2^1 \cdots s_{n_1}^1}$ and $((2d+l-1+n_1+n_2,s^l,d+1))_{s^{2d+l-1} s_1^1 s_2^1 \cdots s_{n_1}^1 q_1^1q_2^1\cdots q_{n_2}^1}$.

(ii) when $l=0$, there exists an $(s_1s_2\cdots s_{n_1}-1)$-QMDS code $((2d+n_1,1,d+1))_{s^{2d} s_1^1 s_2^1 \cdots s_{n_1}^1}$ and an $(\frac{s s_1s_2\cdots s_{n_1}}{w}-1)$-QMDS code $((2d-1+n_1+n_2,1,d+1))_{s^{2d-1} s_1^1 s_2^1 \cdots s_{n_1}^1 q_1^1q_2^1\cdots q_{n_2}^1}$ for $w=max\{s_1,s_2,\ldots,s_{n_1},q_1,q_2,\ldots,q_{n_2}\}$.
\end{theorem}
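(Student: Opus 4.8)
The plan is to mirror the pattern used in Theorems \ref{5s2} and \ref{52s} and Theorem \ref{s1QMDS}, namely: start from the given symmetric OA of index unity, build an orthogonal partition whose blocks are themselves OAs, apply the expansive replacement method (Lemma \ref{kuozhang}) to split some of the $s$-level columns into mixed-alphabet columns, keep track of the minimum distance via Lemma \ref{zuixiaojuli}, and finally invoke Lemma \ref{ym} together with Definition \ref{mQMDS} to read off the quantum parameters. The key numerical point that makes the claimed $K=s^l$ (resp.\ $K=1$) values land on the $m$-QMDS bound is that, for an $OA(s^{d+l},\,2d+2l+1,\,s,\,d+l)$, deleting columns and splitting one $s$-level column into an $OA(s,n_1,s_1^1\cdots s_{n_1}^1,n_1)$ (which exists since $(s_1\cdots s_{n_1})\mid s$, so $s/(s_1\cdots s_{n_1})$ is absorbed as one extra column of that size — or, more carefully, we use the full factorization so that the product of the new alphabet sizes equals $s$) produces an array whose minimal distance and strength of the orthogonal partition combine to give $d+1$, while the quantum Singleton quantity $\min\{\prod_{j\in C}s_j : |C|=n-2d\}$ evaluates to $s^l\cdot(s_1\cdots s_{n_1})$ (the smallest $n-2d$ alphabet sizes being the $s_i$ together with some copies of $s$), so that $K=s^l(s_1\cdots s_{n_1}) - (s_1\cdots s_{n_1}-1)s^l = s^l$ as required, and similarly $K=1$ when $l=0$.

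First I would fix $l\ge 1$ and the array $A_0=OA(s^{d+l},2d+2l+1,s,d+l)$. By Lemma \ref{zuixiaojuli}, $MD(A_0)=2d+2l+1-(d+l)+1 = d+l+2$. The orthogonal partition is obtained by the standard device behind Lemma \ref{qu1}: choose $l$ designated columns, and for each combination of fixed symbols in those $l$ columns collect the rows of $A_0$ agreeing with that combination; this partitions the rows into $s^l$ blocks, each of which is an $OA(s^{d+l-l\cdot 1}? , \ldots)$ — more precisely each block has $s^{d}$ rows and, after deleting the $l$ fixed columns, is an $OA(s^{d},2d+l+1,s,d)$, hence of strength $\ge d$. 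Viewed as a partition of $A_0$ itself (keeping the constant columns), the partition has strength $t'=d$ (the constant columns contribute nothing beyond strength $0$, but crossed with the strength-$d$ remainder the partition strength is exactly $d$), and each block has $MD$ at least $d+1$ on the non-constant part. Then I would delete columns down to the target length: keep $2d+l$ of the $s$-level columns for part (i) first clause, split one of them by expansive replacement with $OA(s,n_1,s_1^1\cdots s_{n_1}^1,n_1)$ (Lemma \ref{kuozhang}, whose existence follows from $(s_1\cdots s_{n_1})\mid s$, writing the quotient as an additional factor), giving an $OA(s^{d+l},\,2d+l+n_1,\,s^{2d+l-1}(s/(s_1\cdots s_{n_1}))^1 s_1^1\cdots s_{n_1}^1? )$ — here I should be careful to match the stated alphabet profile $s^{2d+l}s_1^1\cdots s_{n_1}^1$, which forces $s_1\cdots s_{n_1}=s$ in that clause or the use of a second replacement; I will present the cleanest route that reproduces exactly the profile in the statement. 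The second clause of (i) then comes from splitting one further $s$-level column by an $OA(s,n_2,q_1^1\cdots q_{n_2}^1,n_2)$, which exists since $s=q_1\cdots q_{n_2}$; strength and $MD$ are preserved by Lemma \ref{kuozhang}, and $h=MD$ of the resulting array together with $t'=d$ gives $d+1=\min\{t'+1,h\}$ via Lemma \ref{ym}.

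For part (ii), $l=0$: here $A_0=OA(s^{d},2d+1,s,d)$ with $MD=d+2$, and there is no room for a nontrivial row-partition of positive strength beyond the trivial $K=1$ one, so I would take $u=1$ (the trivial partition of strength $t'$ arbitrarily large, equivalently just note $K=1$), perform the column splittings as above, and conclude with Lemma \ref{ym} and Definition \ref{mQMDS}; the value of $m$ is then the quantum Singleton quantity minus $1$, which is $(s_1\cdots s_{n_1})-1$ for the first code and, when the $q_j$'s are also present, $\frac{s\,(s_1\cdots s_{n_1})}{w}-1$ where $w$ is the largest of all the small alphabet sizes, because the minimizing set $C$ of size $n-2d$ avoids the single largest column and picks up all the others. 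I expect the main obstacle to be purely bookkeeping: verifying that after the sequence of column deletions and expansive replacements the alphabet profile is \emph{exactly} the one claimed (in particular handling the factor $s/(s_1\cdots s_{n_1})$ so it either disappears or is silently merged), and checking in each case that the minimum distance of the replaced array is still governed by the strength rather than dropping — this needs the observation that replacing one $s$-level column by an $OA(s,n_i,\cdots,n_i)$ of index unity changes a single coordinate of disagreement into at least one coordinate of disagreement among the new columns, so $MD$ cannot increase and the partition strength $t'=d$ remains the binding constraint, yielding $d+1$ throughout. Once that is pinned down, assembling the four quantum codes is immediate from Lemma \ref{ym} and Definition \ref{mQMDS}.
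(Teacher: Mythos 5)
Your overall strategy is the paper's: fix $l$ columns of $A_0=OA(s^{d+l},2d+2l+1,s,d+l)$ to split the rows into $s^l$ blocks, each an $OA(s^d,2d+l+1,s,d)$ via Lemma \ref{qu1}, discard those $l$ columns, apply expansive replacement to one (then a second) $s$-level column, track $MD$ with Lemma \ref{zuixiaojuli}, and finish with Lemma \ref{ym} and Definition \ref{mQMDS}; the Singleton-bound bookkeeping for $m$ is also identical. However, there is a concrete gap you flag but do not resolve, and your proposed resolution is wrong. You keep ``$2d+l$ of the $s$-level columns'' and split one, which yields the profile $s^{2d+l-1}s_1^1\cdots s_{n_1}^1$, not the claimed $s^{2d+l}s_1^1\cdots s_{n_1}^1$; you then suggest this ``forces $s_1\cdots s_{n_1}=s$'' or requires a second replacement. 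Neither is needed. The source of the confusion is your treatment of the replacement array: $OA(s,n_1,s_1^1\cdots s_{n_1}^1,n_1)$ with $(s_1\cdots s_{n_1})\mid s$ is simply the full factorial on $s_1\times\cdots\times s_{n_1}$ with each row repeated $s/(s_1\cdots s_{n_1})$ times --- a legitimate OA with $s$ rows and exactly $n_1$ columns, with no leftover factor $s/(s_1\cdots s_{n_1})$ to be ``absorbed'' or ``silently merged.'' The correct count is: $\overline{A}$ has $2d+l+1$ $s$-level columns (no deletion), and replacing one leaves $2d+l$ of them plus the $n_1$ new columns, giving exactly the stated profile.

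A second, smaller inaccuracy: you argue $MD$ is controlled because the replacement ``changes a single coordinate of disagreement into at least one coordinate of disagreement among the new columns.'' That is only true when the replacement OA is injective on levels (index unity), which holds for the $q$-replacement (since $s=q_1\cdots q_{n_2}$) but not for the $s_i$-replacement when $s_1\cdots s_{n_1}<s$; there, two distinct levels may map to identical rows and a disagreement can vanish. The argument still goes through, but for a different reason: $MD(\overline{A})=d+2$ by Lemma \ref{zuixiaojuli}, so even if the non-bijective replacement destroys one coordinate of disagreement, the result still has $MD\geq d+1$, which together with the strength-$d$ partition gives $d+1=\min\{t'+1,h\}$ in Lemma \ref{ym}. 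Once these two points are repaired, your proof coincides with the paper's.
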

\begin{proof}
Let $A=OA(s^{d+l},2d+2l+1,s,d+l)$. By Lemma \ref{zuixiaojuli}, then $MD(A)=d+l+2$. By row permutations,
$A=\begin{pmatrix}
\mathbb{Z}_s^l\otimes \textbf{1}_{s^d}&\overline{A}
\end{pmatrix}=\begin{pmatrix}
(00\cdots0)\otimes \textbf{1}_{s^d}&A_1\\
(00\cdots1)\otimes \textbf{1}_{s^d}&A_2\\
\vdots&\vdots\\
(s-1,s-1,\cdots,s-1)\otimes \textbf{1}_{s^d}&A_{s^l}
\end{pmatrix}$. By Lemma \ref{qu1}, $A_1,A_2,\ldots,A_{s^l}$ are all $OA(s^d,2d+l+1,s,d)$, and $MD(\overline{A})=d+2$. We can obtain a $B=OA(s^{d+l},2d+l+1,s^{2d+l}s_1^1s_2^1\cdots s_{n_1}^1,d+l)$ with an orthogonal partition $\{B_1,B_2,\ldots,B_{s^l}\}$ of strength $d$ after an $s$-level column of $\overline{A}=OA(s^{d+l},2d+l+1,s,d+l)$ is replaced by an $OA(s,n_1,s_1^1s_2^1\cdots s_{n_1}^1,n_1)$ from Lemma \ref{kuozhang}. And $MD(B)=d+1$. When $l\geq0$, then there exists an $(s_1s_2\cdots s_{n_1}-1)s^l$-QMDS code $((2d+l+n_1,s^l,d+1))_{s^{2d+l} s_1^1 s_2^1\cdots s_{n_1}^1}$ from Lemma \ref{ym} and Definition \ref{mQMDS}. Similarly, We can obtain a $C=OA(s^{d+l},2d+l-1+n_1+n_2,s^{2d+l-1}s_1^1s_2^1\cdots s_{n_1}^1 q_1^1 q_2^1\cdots q_{n_2}^1,d+l)$ with $MD(C)=d+1$ after an $s$-level column of $B$ is replaced by an $OA(s,n_2,q_1^1 q_2^1\cdots q_{n_2}^1,n_2)$ from Lemma \ref{kuozhang}. And $C$ has an orthogonal partition $\{C_1,C_2,\ldots,C_{s^l}\}$ of strength $d$. By Lemma \ref{ym} and Definition \ref{mQMDS}, when $l\geq1$, then there exists an $(s_1s_2\cdots s_{n_1}-1)s^l$-QMDS code  $((2d+l-1+n_1+n_2,s^l,d+1))_{s^{2d+l-1} s_1^1 s_2^1 \cdots s_{n_1}^1 q_1^1 q_2^1 \cdots q_{n_2}^1}$, when $l=0$, then there exists an $(\frac{s s_1s_2\cdots s_{n_1}}{w}-1)$-QMDS code $((2d-1+n_1+n_2,1,d+1))_{s^{2d-1} s_1^1 s_2^1 \cdots s_{n_1}^1 q_1^1q_2^1\cdots q_{n_2}^1}$ for $w=max\{s_1,s_2,\ldots,s_{n_1},q_1,q_2,\ldots,q_{n_2}\}$.
\end{proof}

\begin{example}
Take $s=12$, $d=1$, $l=1$ in Theorem \ref{tn} (i). We can obtain all codes in Table IV and their orthogonal bases. For example, the code $((4,12,2))_{12^3 2^1}$ has an orthogonal base $\{|\phi_1\rangle,\ldots,|\phi_{12}\rangle\}$ where
$|\phi_1\rangle=|0,0,0,0\rangle+
|1,6,3,1\rangle+
|2,8,6,0\rangle+
|3,2,1,1\rangle+
|4,7,9,0\rangle+
|5,1,11,1\rangle+
|6,9,2,0\rangle+
|7,11,8,1\rangle+
|8,4,5,0\rangle+
|9,10,4,1\rangle+
|10,5,7,1\rangle+
|11,3,10,0\rangle$,\\
$|\phi_2\rangle=|0,2,6,1\rangle+
|1,1,1,0\rangle+
|2,7,4,1\rangle+
|3,9,7,0\rangle+
|4,3,2,1\rangle+
|5,8,10,0\rangle+
|6,4,11,0\rangle+
|7,10,3,0\rangle+
|8,6,9,1\rangle+
|9,5,0,0\rangle+
|10,11,5,1\rangle+
|11,0,8,1\rangle$,\\
$|\phi_3\rangle=|0,9,11,0\rangle+
|1,3,7,1\rangle+
|2,2,2,0\rangle+
|3,8,5,1\rangle+
|4,10,8,0\rangle+
|5,4,3,1\rangle+
|6,1,9,1\rangle+
|7,5,6,0\rangle+
|8,11,4,0\rangle+
|9,7,10,1\rangle+
|10,0,1,0\rangle+
|11,6,0,1\rangle$,\\
$|\phi_4\rangle=|0,5,4,1\rangle+
|1,10,6,0\rangle+
|2,4,8,1\rangle+
|3,3,3,0\rangle+
|4,9,0,1\rangle+
|5,11,9,0\rangle+
|6,7,1,1\rangle+
|7,2,10,1\rangle+
|8,0,7,0\rangle+
|9,6,5,0\rangle+
|10,8,11,1\rangle+
|11,1,2,0\rangle$,\\
$|\phi_5\rangle=|0,6,10,0\rangle+
|1,0,5,1\rangle+
|2,11,7,0\rangle+
|3,5,9,1\rangle+
|4,4,4,0\rangle+
|5,10,1,1\rangle+
|6,2,3,0\rangle+
|7,8,2,1\rangle+
|8,3,11,1\rangle+
|9,1,8,0\rangle+
|10,7,0,0\rangle+
|11,9,6,1\rangle$,\\
$|\phi_6\rangle=|0,11,2,1\rangle+
|1,7,11,0\rangle+
|2,1,0,1\rangle+
|3,6,8,0\rangle+
|4,0,10,1\rangle+
|5,5,5,0\rangle+
|6,10,7,1\rangle+
|7,3,4,0\rangle+
|8,9,3,1\rangle+
|9,4,6,1\rangle+
|10,2,9,0\rangle+
|11,8,1,0\rangle$,\\
$|\phi_7\rangle=|0,3,8,1\rangle+
|1,5,2,0\rangle+
|2,10,11,1\rangle+
|3,4,10,0\rangle+
|4,11,1,0\rangle+
|5,9,4,1\rangle+
|6,6,6,1\rangle+
|7,0,9,0\rangle+
|8,2,0,1\rangle+
|9,8,7,0\rangle+
|10,1,3,1\rangle+
|11,7,5,0\rangle$,\\
$|\phi_8\rangle=|0,10,5,1\rangle+
|1,4,9,1\rangle+
|2,0,3,0\rangle+
|3,11,6,1\rangle+
|4,5,11,0\rangle+
|5,6,2,0\rangle+
|6,8,0,0\rangle+
|7,7,7,1\rangle+
|8,1,10,0\rangle+
|9,3,1,1\rangle+
|10,9,8,0\rangle+
|11,2,4,1\rangle$,\\
$|\phi_9\rangle=|0,7,3,0\rangle+
|1,11,0,1\rangle+
|2,5,10,1\rangle+
|3,1,4,0\rangle+
|4,6,7,1\rangle+
|5,0,6,0\rangle+
|6,3,5,1\rangle+
|7,9,1,0\rangle+
|8,8,8,1\rangle+
|9,2,11,0\rangle+
|10,4,2,1\rangle+
|11,10,9,0\rangle$,\\
$|\phi_{10}\rangle=|0,1,7,0\rangle+
|1,8,4,0\rangle+
|2,6,1,1\rangle+
|3,0,11,1\rangle+
|4,2,5,0\rangle+
|5,7,8,1\rangle+
|6,11,10,0\rangle+
|7,4,0,1\rangle+
|8,10,2,0\rangle+
|9,9,9,1\rangle+
|10,3,6,0\rangle+
|11,5,3,1\rangle$,\\
$|\phi_{11}\rangle=|0,8,9,1\rangle+
|1,2,8,0\rangle+
|2,9,5,0\rangle+
|3,7,2,1\rangle+
|4,1,6,1\rangle+
|5,3,0,0\rangle+
|6,0,4,1\rangle+
|7,6,11,0\rangle+
|8,5,1,1\rangle+
|9,11,3,0\rangle+
|10,10,10,1\rangle+
|11,4,7,0\rangle$,\\
and $|\phi_{12}\rangle=|0,4,1,0\rangle+
|1,9,10,1\rangle+
|2,3,9,0\rangle+
|3,10,0,0\rangle+
|4,8,3,1\rangle+
|5,2,7,1\rangle+
|6,5,8,0\rangle+
|7,1,5,1\rangle+
|8,7,6,0\rangle+
|9,0,2,1\rangle+
|10,6,4,0\rangle+
|11,11,11,1\rangle$.
\end{example}

\noindent {TABLE IV.} New $m$-QMDS codes $((2d+l+n_1,s^l,d+1))_{s^{2d+l} s_1^1 s_2^1 \cdots s_{n_1}^1}$ and $((2d+l-1+n_1+n_2,s^l,d+1))_{s^{2d+l-1} s_1^1 s_2^1 \cdots s_{n_1}^1 q_1^1q_2^1\cdots q_{n_2}^1}$ obtained by Theorem \ref{tn} (i) with $m=s^l(s_1s_2\cdots s_{n_1}-1)$, $s=12$, $d=1$ and $l=1$.
\vspace{-7mm}
\renewcommand\arraystretch{1}
\setlength{\tabcolsep}{7pt}
\begin{longtable}{lllll}
\label{Table} \\
\hline\hline $m$&$s_1\cdots s_{n_1}$&$((3+n_1,12,2))_{12^3 s_1^1 \cdots s_{n_1}^1}$&$q_1\cdots q_{n_2}$&$((2+n_1+n_2,12,2))_{12^2 s_1^1 \cdots s_{n_1}^1 q_1^1 \cdots q_{n_2}^1}$\\  \hline
\endfirsthead
\multicolumn{5}{c}
{{\bfseries }} \\
\hline\hline $m$&$s_1\cdots s_{n_1}$&$((3+n_1,12,2))_{12^3 s_1^1 \cdots s_{n_1}^1}$&$q_1\cdots q_{n_2}$&$((2+n_1+n_2,12,2))_{12^2 s_1^1 \cdots s_{n_1}^1 q_1^1 \cdots q_{n_2}^1}$\\  \hline
\endhead
\hline \\
\endfoot
\endlastfoot
12&2&$((4,12,2))_{12^3 2^1}$&\tabincell{l}{$6\times2$\\$4\times3$\\$3\times2\times2$}&\tabincell{l}{$((5,12,2))_{12^2 6^1 2^2}$\\$((5,12,2))_{12^2 4^1 3^1 2^1}$\\$((6,12,2))_{12^2 3^1 2^3}$}\\\hline

24&3&$((4,12,2))_{12^3 3^1}$&\tabincell{l}{$6\times2$\\$4\times3$\\$3\times2\times2$}&\tabincell{l}{$((5,12, 2))_{12^2 6^1 3^1 2^1}$\\$((5,12,2))_{12^2 4^1 3^2}$\\$((6,12,2))_{12^2 3^2 2^2}$}\\\hline

36&4&$((4,12,2))_{12^3 4^1}$&\tabincell{l}{$6\times2$\\$4\times3$\\$3\times2\times2$}&\tabincell{l}{$((5,12,2))_{12^2 6^1 4^1 2^1}$\\$((5,12,2))_{12^2 4^2 3^1}$\\$((6,12,2))_{12^2 4^1 3^1 2^2}$}\\\hline
60&6&$((4,12,2))_{12^3 6^1}$&\tabincell{l}{$6\times2$\\$4\times3$\\$3\times2\times2$}&\tabincell{l}{$((5,12,2))_{12^2 6^2 2^1}$\\$((5,12,2))_{12^2 6^1 4^1 3^1}$\\$((6,12,2))_{12^2 6^1 3^1 2^2}$}\\\hline

132&$6\times2$&$((5,12,2))_{12^3 6^1 2^1}$&\tabincell{l}{$6\times2$\\$4\times3$\\$3\times2\times2$\\}&\tabincell{l}{$((6,12,2))_{12^2 6^2 2^2}$\\$((6,12,2))_{12^2 6^1 4^1 3^1 2^1}$\\$((7,12,2))_{12^2 6^1 3^1 2^3}$}\\\hline
132&$4\times3$&$((5,12,2))_{12^3 4^1 3^1}$&\tabincell{l}{$6\times2$\\$4\times3$\\$3\times2\times2$}&\tabincell{l}{$((6,12,2))_{12^2 6^1 4^1 3^1 2^1}$\\$((6,12,2))_{12^2 4^2 3^2}$\\$((7,12,2))_{12^2 4^1 3^2 2^2}$}\\\hline
60&$3\times2$&$((5,12,2))_{12^3 3^1 2^1}$&\tabincell{l}{$6\times2$\\$4\times3$\\$3\times2\times2$}&\tabincell{l}{$((6,12,2))_{12^2 6^1 3^1 2^2}$\\$((6,12,2))_{12^2 4^1 3^2 2^1}$\\$((7,12,2))_{12^2 3^2 2^3}$}\\\hline
36&$2\times2$&$((5,12,2))_{12^3 2^2}$&\tabincell{l}{$6\times2$\\$4\times3$\\$3\times2\times2$}&\tabincell{l}{$((6,12,2))_{12^2 6^1 2^3}$\\$((6,12,2))_{12^2 4^1 3^1 2^2}$\\$((7,12,2))_{12^2 3^1 2^4}$}\\\hline
132&$3\times2\times2$&$((6,12,2))_{12^3 3^1 2^2}$&\tabincell{l}{$6\times2$\\$4\times3$\\$3\times2\times2$}&\tabincell{l}{$((7,12,2))_{12^2 6^1 3^1 2^3}$\\$((7,12,2))_{12^2 4^1 3^2 2^2}$\\$((8,12,2))_{12^2 3^2 2^4}$}\\\hline\hline
\end{longtable}
\vspace{-5mm}
\begin{example}
Take $s=12$, $d=1$, $l=0$ in Theorem \ref{tn} (ii). We can obtain all codes in Table $V$ and their orthogonal bases. For example, the code 1-QMDS code $((5,1,3))_{12^4 2^1}$ has a basis state
$|\phi\rangle=
|0,0,0,0,0\rangle+
|0,1,6,3,1\rangle+
|0,2,8,6,0\rangle+
|0,3,2,1,1\rangle+
|0,4,7,9,0\rangle+
|0,5,1,11,1\rangle+
|0,6,9,2,0\rangle+
|0,7,11,8,1\rangle+
|0,8,4,5,0\rangle+
|0,9,10,4,1\rangle+
|0,10,5,7,1\rangle+
|0,11,3,10,0\rangle+
|1,0,2,6,1\rangle+
|1,1,1,1,0\rangle+
|1,2,7,4,1\rangle+
|1,3,9,7,0\rangle+
|1,4,3,2,1\rangle+
|1,5,8,10,0\rangle+
|1,6,4,11,0\rangle+
|1,7,10,3,0\rangle+
|1,8,6,9,1\rangle+
|1,9,5,0,0\rangle+
|1,10,11,5,1\rangle+
|1,11,0,8,1\rangle+
|2,0,9,11,0\rangle+
|2,1,3,7,1\rangle+
|2,2,2,2,0\rangle+
|2,3,8,5,1\rangle+
|2,4,10,8,0\rangle+
|2,5,4,3,1\rangle+
|2,6,1,9,1\rangle+
|2,7,5,6,0\rangle+
|2,8,11,4,0\rangle+
|2,9,7,10,1\rangle+
|2,10,0,1,0\rangle+
|2,11,6,0,1\rangle+
|3,0,5,4,1\rangle+
|3,1,10,6,0\rangle+
|3,2,4,8,1\rangle+
|3,3,3,3,0\rangle+
|3,4,9,0,1\rangle+
|3,5,11,9,0\rangle+
|3,6,7,1,1\rangle+
|3,7,2,10,1\rangle+
|3,8,0,7,0\rangle+
|3,9,6,5,0\rangle+
|3,10,8,11,1\rangle+
|3,11,1,2,0\rangle+
|4,0,6,10,0\rangle+
|4,1,0,5,1\rangle+
|4,2,11,7,0\rangle+
|4,3,5,9,1\rangle+
|4,4,4,4,0\rangle+
|4,5,10,1,1\rangle+
|4,6,2,3,0\rangle+
|4,7,8,2,1\rangle+
|4,8,3,11,1\rangle+
|4,9,1,8,0\rangle+
|4,10,7,0,0\rangle+
|4,11,9,6,1\rangle+
|5,0,11,2,1\rangle+
|5,1,7,11,0\rangle+
|5,2,1,0,1\rangle+
|5,3,6,8,0\rangle+
|5,4,0,10,1\rangle+
|5,5,5,5,0\rangle+
|5,6,10,7,1\rangle+
|5,7,3,4,0\rangle+
|5,8,9,3,1\rangle+
|5,9,4,6,1\rangle+
|5,10,2,9,0\rangle+
|5,11,8,1,0\rangle+
|6,0,3,8,1\rangle+
|6,1,5,2,0\rangle+
|6,2,10,11,1\rangle+
|6,3,4,10,0\rangle+
|6,4,11,1,0\rangle+
|6,5,9,4,1\rangle+
|6,6,6,6,1\rangle+
|6,7,0,9,0\rangle+
|6,8,2,0,1\rangle+
|6,9,8,7,0\rangle+
|6,10,1,3,1\rangle+
|6,11,7,5,0\rangle+
|7,0,10,5,1\rangle+
|7,1,4,9,1\rangle+
|7,2,0,3,0\rangle+
|7,3,11,6,1\rangle+
|7,4,5,11,0\rangle+
|7,5,6,2,0\rangle+
|7,6,8,0,0\rangle+
|7,7,7,7,1\rangle+
|7,8,1,10,0\rangle+
|7,9,3,1,1\rangle+
|7,10,9,8,0\rangle+
|7,11,2,4,1\rangle+
|8,0,7,3,0\rangle+
|8,1,11,0,1\rangle+
|8,2,5,10,1\rangle+
|8,3,1,4,0\rangle+
|8,4,6,7,1\rangle+
|8,5,0,6,0\rangle+
|8,6,3,5,1\rangle+
|8,7,9,1,0\rangle+
|8,8,8,8,1\rangle+
|8,9,2,11,0\rangle+
|8,10,4,2,1\rangle+
|8,11,10,9,0\rangle+
|9,0,1,7,0\rangle+
|9,1,8,4,0\rangle+
|9,2,6,1,1\rangle+
|9,3,0,11,1\rangle+
|9,4,2,5,0\rangle+
|9,5,7,8,1\rangle+
|9,6,11,10,0\rangle+
|9,7,4,0,1\rangle+
|9,8,10,2,0\rangle+
|9,9,9,9,1\rangle+
|9,10,3,6,0\rangle+
|9,11,5,3,1\rangle+
|10,0,8,9,1\rangle+
|10,1,2,8,0\rangle+
|10,2,9,5,0\rangle+
|10,3,7,2,1\rangle+
|10,4,1,6,1\rangle+
|10,5,3,0,0\rangle+
|10,6,0,4,1\rangle+
|10,7,6,11,0\rangle+
|10,8,5,1,1\rangle+
|10,9,11,3,0\rangle+
|10,10,10,10,1\rangle+
|10,11,4,7,0\rangle+
|11,0,4,1,0\rangle+
|11,1,9,10,1\rangle+
|11,2,3,9,0\rangle+
|11,3,10,0,0\rangle+
|11,4,8,3,1\rangle+
|11,5,2,7,1\rangle+
|11,6,5,8,0\rangle+
|11,7,1,5,1\rangle+
|11,8,7,6,0\rangle+
|11,9,0,2,1\rangle+
|11,10,6,4,0\rangle+
|11,11,11,11,1\rangle$.
\end{example}

\noindent {TABLE V.} New $m_1$-QMDS codes $((2d+n_1,1,d+1))_{s^{2d} s_1^1 s_2^1 \cdots s_{n_1}^1}$ and $m_2$-QMDS codes $((2d-1+n_1+n_2,1,d+1))_{s^{2d-1} s_1^1 s_2^1 \cdots s_{n_1}^1 q_1^1q_2^1\cdots q_{n_2}^1}$ obtained by Theorem \ref{tn} (ii) with $m_1=s_1s_2\cdots s_{n_1}-1$, $m_2=\frac{s s_1s_2\cdots s_{n_1}}{w}-1$ for $w=max\{s_1,s_2,\ldots,s_{n_1},q_1,q_2,\ldots,q_{n_2}\}$, $s=12$ and $d=2$.
\vspace{-7mm}
\renewcommand\arraystretch{1}
\setlength{\tabcolsep}{7pt}
\begin{longtable}{llllll}
\label{Table} \\
\hline\hline $s_1\cdots s_{n_1}$&$m_1$&$((4+n_1,1,3))_{12^4 s_1^1 \cdots s_{n_1}^1}$&$q_1\cdots q_{n_2}$&$m_2$&$((3+n_1+n_2,1,3))_{12^3 s_1^1 \cdots s_{n_1}^1 q_1^1 \cdots q_{n_2}^1}$\\  \hline
\endfirsthead
\multicolumn{6}{c}
{{\bfseries }} \\
\hline\hline $s_1\cdots s_{n_1}$&$m_1$&$((4+n_1,1,3))_{12^4 s_1^1 \cdots s_{n_1}^1}$&$q_1\cdots q_{n_2}$&$m_2$&$((3+n_1+n_2,1,3))_{12^3 s_1^1 \cdots s_{n_1}^1 q_1^1 \cdots q_{n_2}^1}$\\  \hline
\endhead
\hline \\
\endfoot
\endlastfoot
2&1&$((5,1,3))_{12^4 2^1}$&
\tabincell{l}{$6\times2$\\$4\times3$\\$3\times2\times2$}
&\tabincell{l}{3\\5\\7}&\tabincell{l}{$((6,1,3))_{12^3 6^1 2^2}$\\$((6,1,3))_{12^3 4^1 3^1 2^1}$\\$((7,1,3))_{12^3 3^1 2^3}$}\\\hline

3&2&$((5,1,3))_{12^4 3^1}$&
\tabincell{l}{$6\times2$\\$4\times3$\\$3\times2\times2$}
&\tabincell{l}{5\\8\\11}&\tabincell{l}{$((6,1,3))_{12^3 6^1 3^1 2^1}$\\$((6,1,3))_{12^3 4^1 3^2}$\\$((7,1,3))_{12^3 3^2 2^2}$}\\\hline

4&3&$((5,1,3))_{12^4 4^1}$&
\tabincell{l}{$6\times2$\\$4\times3$\\$3\times2\times2$}
&\tabincell{l}{7\\11\\11}&\tabincell{l}{$((6,1,3))_{12^3 6^1 4^1 2^1}$\\$((6,1,3))_{12^3 4^2 3^1}$\\$((7,1,3))_{12^3 4^1 3^1 2^2}$}\\\hline

6&5&$((5,1,3))_{12^4 6^1}$&
\tabincell{l}{$6\times2$\\$4\times3$\\$3\times2\times2$}
&\tabincell{l}{11\\11\\11}&\tabincell{l}{$((6,1,3))_{12^3 6^2 2^1}$\\$((6,1,3))_{12^3 6^1 4^1 3^1}$\\ $((7,1,3))_{12^3 6^1 3^1 2^2}$}\\\hline

$6\times2$&11&$((6,1,3))_{12^4 6^1 2^1}$&
\tabincell{l}{$6\times2$\\$4\times3$\\$3\times2\times2$}
&\tabincell{l}{23\\23\\23}&\tabincell{l}{$((7,1,3))_{12^3 6^2 2^2}$\\$((7,1,3))_{12^3 6^1 4^1 3^1 2^1}$\\$((8,1,3))_{12^3 6^1 3^1 2^3}$}\\\hline

$4\times3$&11&$((6,1,3))_{12^4 4^1 3^1}$&
\tabincell{l}{$6\times2$\\$4\times3$\\$3\times2\times2$}
&\tabincell{l}{23\\35\\35}&\tabincell{l}{$((7,1,3))_{12^3 6^1 4^1 3^1 2^1}$\\$((7,1,3))_{12^3 4^2 3^2}$\\$((8,1,3))_{12^3 4^1 3^2 2^2}$}\\\hline

$3\times2$&5&$((6,1,3))_{12^4 3^1 2^1}$&
\tabincell{l}{$6\times2$\\$4\times3$\\$3\times2\times2$}
&\tabincell{l}{11\\17\\23}&\tabincell{l}{$((7,1,3))_{12^3 6^1 3^1 2^2}$\\$((7,1,3))_{12^3 4^1 3^2 2^1}$\\$((8,1,3))_{12^3 3^2 2^3}$}\\\hline

$2\times2$&3&$((6,1,3))_{12^4 2^2}$&
\tabincell{l}{$6\times2$\\$4\times3$\\$3\times2\times2$}
&\tabincell{l}{7\\11\\15}&\tabincell{l}{$((7,1,3))_{12^3 6^1 2^3}$\\$((7,1,3))_{12^3 4^1 3^1 2^2}$\\$((8,1,3))_{12^3 3^1 2^4}$}\\\hline

$3\times2\times2$&11&$((7,1,3))_{12^4 3^1 2^2}$&
\tabincell{l}{$6\times2$\\$4\times3$\\$3\times2\times2$}
&\tabincell{l}{23\\35\\47}&\tabincell{l}{$((8,1,3))_{12^3 6^1 3^1 2^3}$\\$((8,1,3))_{12^3 4^1 3^2 2^2}$\\$((9,1,3))_{12^3 3^2 2^4}$}\\\hline\hline
\end{longtable}
\vspace{-5mm}
The following two corollaries follow immediately from Theorem \ref{tn}.
\begin{corollary}\label{uv}
Suppose $d\geq 1$ and $s_1,s_2,\ldots,s_{n_1},q_1,q_2,\ldots,q_{n_2}\geq2$ are all integers and that $s=u_1u_2\cdots u_v$ for prime powers $u_1,u_2,\ldots,u_v$ and $min\{u_1,u_2,\ldots,u_v\}\geq 2d+2l$. If $(s_1s_2\cdots s_{n_1})|s$ and $s=q_1 q_2 \cdots q_{n_2}$, then

(i) when $l\geq1$, there exist two $(s_1s_2\cdots s_{n_1}-1)s^l$-QMDS codes $((2d+l+n_1,s^l,d+1))_{s^{2d+l} s_1^1 s_2^1\cdots s_{n_1}^1}$ and $((2d+l-1+n_1+n_2,s^l,d+1))_{s^{2d+l-1} s_1^1 s_2^1 \cdots s_{n_1}^1 q_1^1 q_2^1\cdots q_{n_2}^1}$.

(ii) when $l=0$, there exist an $(s_1s_2\cdots s_{n_1}-1)$-QMDS code $((2d+n_1,1,d+1))_{s^{2d} s_1^1 s_2^1 \cdots s_{n_1}^1}$ and an $(\frac{ss_1s_2\cdots s_{n_1}}{w}-1)$-QMDS code  $((2d-1+n_1+n_2,1,d+1))_{s^{2d-1} s_1^1 s_2^1 \cdots s_{n_1}^1 q_1^1q_2^1\cdots q_{n_2}^1}$ for $w=max\{s_1,s_2,\ldots,s_{n_1},q_1,q_2,\ldots,q_{n_2}\}$.
\end{corollary}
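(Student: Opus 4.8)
The plan is to reduce Corollary \ref{uv} to Theorem \ref{tn} by manufacturing the single hypothesis that Theorem \ref{tn} requires, namely the existence of an $OA(s^{d+l},2d+2l+1,s,d+l)$. Once such an array is in hand, all the conclusions of both parts of the corollary are literally the conclusions of Theorem \ref{tn} (i) and (ii), so no further work is needed beyond invoking that theorem. Thus the entire proof consists of constructing the orthogonal array from the arithmetic assumptions on $s$.

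First I would write $t = d+l$ and invoke Lemma \ref{bush}(i): for a prime power $u_i$ with $u_i \ge t-1$ there exists an $OA(u_i^{t}, u_i+1, u_i, t)$ of index unity. Since the hypothesis guarantees $\min\{u_1,\dots,u_v\} \ge 2d+2l \ge d+l+1 = t+1 > t-1$, and moreover $u_i + 1 \ge 2d+2l+1 = 2t+1$, each $OA(u_i^{t}, u_i+1, u_i, t)$ has at least $2d+2l+1$ columns; deleting surplus columns yields an $OA(u_i^{t}, 2d+2l+1, u_i, t)$ for each $i$. (By Lemma \ref{zuixiaojuli} each of these has full minimal distance, though we only need the array itself here.)

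Next I would repeatedly apply Lemma \ref{chengtui} (the "multiplication of OAs" lemma) across $i = 1, \dots, v$: starting from the $OA(u_1^{t}, 2d+2l+1, u_1, t)$ and the $OA(u_2^{t}, 2d+2l+1, u_2, t)$, Lemma \ref{chengtui} produces an $OA((u_1u_2)^{t}, 2d+2l+1, (u_1u_2), t)$ with the two alphabet sizes multiplied componentwise to $u_1u_2$ in every column; iterating gives an $OA\big((u_1\cdots u_v)^{t}, 2d+2l+1, u_1\cdots u_v, t\big) = OA(s^{d+l}, 2d+2l+1, s, d+l)$. This is exactly the hypothesis of Theorem \ref{tn}. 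With $(s_1\cdots s_{n_1})\mid s$ and $s = q_1\cdots q_{n_2}$ also assumed, Theorem \ref{tn}(i) gives the two codes in part (i) when $l\ge1$, and Theorem \ref{tn}(ii) gives the two codes in part (ii) when $l=0$, completing the proof.

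The only place requiring care — and the main (minor) obstacle — is bookkeeping the column and strength inequalities when applying Lemma \ref{bush}(i) and Lemma \ref{chengtui}: one must check that $u_i+1 \ge 2d+2l+1$ and $u_i \ge t-1$ both follow from the single clean bound $\min_i u_i \ge 2d+2l$, and that Lemma \ref{chengtui} genuinely preserves strength $t = d+l$ and the column count $2d+2l+1$ through all $v-1$ multiplications. Since $d+l = t$ and $2d+2l = 2t$, the bound $\min_i u_i \ge 2t$ comfortably implies $u_i \ge t-1$ and $u_i+1 \ge 2t+1$, so everything goes through; there is no deep step, only the need to state these verifications cleanly before quoting Theorem \ref{tn}.
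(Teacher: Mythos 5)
Your proposal is correct and follows essentially the same route as the paper: obtain an $OA(u_i^{d+l},2d+2l+1,u_i,d+l)$ for each prime power factor via Lemma \ref{bush}, combine them with Lemma \ref{chengtui} into an $OA(s^{d+l},2d+2l+1,s,d+l)$, and then invoke Theorem \ref{tn}. The only difference is that you spell out the column-count and strength bookkeeping ($u_i+1\geq 2d+2l+1$ and $u_i\geq t-1$) that the paper leaves implicit.
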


\begin{proof}
By Lemma \ref{bush}, $OA(u_1^{d+l},2d+2l+1,u_1,d+l)$, $\ldots$, $OA(u_{v-1}^{d+l},2d+2l+1,u_{v-1},d+l)$ and $OA(u_v^{d+l},2d+2l+1,u_v,d+l)$ exist. By Lemma \ref{chengtui}, an $OA(s^{d+l},2d+2l+1,s,d+l)$ exists. It follows from Theorem \ref{tn} that the corollary holds.
\end{proof}

\begin{example}
Take $s=16$, $d=4$, $l=1$ in Corollary \ref{uv} (i). We can obtain
five 16-QMDS codes $((10,16,5))_{16^9 2^1}$, $((11,16,5))_{16^8 8^1 2^2}$, $((11,16,5))_{16^8 4^2 2^1}$, $((12,16,5))_{16^8 4^1 2^3}$, $((13,16,5))_{16^8 2^5}$,
ten 48-QMDS codes $((10,16,5))_{16^9 4^1}$, $((11,16,5))_{16^8 8^1 4^1 2^1}$, $((11,16,5))_{16^8 4^3}$, $((12,16,5))_{16^8 4^2 2^2}$, $((13,16,5))_{16^8 4^1 2^4}$,
$((11,16,5))_{16^9 2^2}$, $((12,16,5))_{16^8 8^1 2^2}$, $((12,16,5))_{16^8 4^2  2^2}$, $((13,16,5))_{16^8 4^1 2^4}$, $((14,16,5))_{16^8 2^6}$,
fifteen 112-QMDS codes $((10,16,5))_{16^9 8^1}$, $((11,16,5))_{16^8 8^2 2^1}$, $((11,16,5))_{16^8 8^1 4^2}$, $((12,16,5))_{16^8 8^1 4^1 2^2}$, $((13,16,5))_{16^8 8^1 2^4}$,
$((11,16,5))_{16^9 4^1 2^1}$, $((12,16,5))_{16^8 8^1 4^1 2^2}$, $((12,16,5))_{16^8 4^3 2^1}$, $((13,16,5))_{16^8 4^2 2^3}$, $((14,16,5))_{16^8 4^1 2^5}$,
$((12,16,5))_{16^9 2^3}$, $((13,16,5))_{16^8 8^1 2^4}$, $((13,16,5))_{16^8 4^2 2^3}$, $((14,16,5))_{16^8 4^1 2^5}$, $((15,16,5))_{16^8 2^7}$
and twenty 240-QMDS codes $((11,16,5))_{16^9 8^1 2^1}$, $((12,16,5))_{16^8 8^2 2^2}$, $((12,16,5))_{16^8 8^1 4^2 2^1}$, $((13,16,5))_{16^8 8^1 4^1 2^3}$, $((14,16,5))_{16^8 8^1 2^5}$,
$((11,16,5))_{16^9 4^2}$, $((12,16,5))_{16^8 8^1 4^2 2^1}$, $((12,16,5))_{16^8 4^4}$, $((13,16,5))_{16^8 4^3 2^2}$,
$((14,16,5))_{16^8 4^2 2^4}$,
$((12,16,5))_{16^9 4^1 2^2}$, $((13,16,5))_{16^8 8^1 4^1 2^3}$, $((13,16,5))_{16^8 4^3 2^2}$, $((14,16,5))_{16^8 4^2 2^4}$, $((15,16,5))_{16^8 4^1 2^6}$,
$((13,16,5))_{16^9 2^4}$, $((14,16,5))_{16^8 8^1 2^5}$, $((14,16,5))_{16^8 4^2 2^4}$, $((15,16,5))_{16^8 4^1 2^6}$, $((16,16,5))_{16^8 2^8}$.
\end{example}

\begin{example}
Take $s=16$, $d=4$, $l=0$ in Corollary \ref{uv} (ii). We have a 1-QMDS codes $((9,1,5))_{16^8 2^1}$,
three 3-QMDS codes $((10,1,5))_{16^7 8^1 2^2}$, $((9,1,5))_{16^8 4^1}$, $((10,1,5))_{16^8 2^2}$,
seven 7-QMDS codes $((10,1,5))_{16^7 4^2 2^1}$, $((11,1,5))_{16^7 4^1 2^3}$, $((10,1,5))_{16^7 8^1 4^1 2^1}$, $((9,1,5))_{16^8 8^1}$, $((10,1,5))_{16^8 4^1 2^1}$, $((11,1,5))_{16^7 8^1 2^3}$, $((11,1,5))_{16^8 2^3}$,
sixteen 15-QMDS codes $((12,1,5))_{16^7 2^5}$, $((10,1,5))_{16^7 4^3}$, $((11,1,5))_{16^7 4^2 2^2}$, $((12,1,5))_{16^7 4^1 2^4}$, $((10,1,5))_{16^7 8^2 2^1}$, $((10,1,5))_{16^7 8^1 4^2}$, $((11,1,5))_{16^7 8^1 4^1 2^2}$, $((12,1,5))_{16^7 8^1 2^4}$, $((10,1,5))_{16^8 8^1 2^1}$, $((10,1,5))_{16^8 4^2}$, $((11,1,5))_{16^7 8^1 4^1 2^2}$, $((11,1,5))_{16^7 4^2 2^2}$, $((12,1,5))_{16^7 4^1 2^4}$, $((11,1,5))_{16^8 4^1 2^2}$, $((12,1,5))_{16^7 8^1 2^4}$, $((12,1,5))_{16^8 2^4}$,
thirteen 31-QMDS codes $((11,1,5))_{16^7 8^2 2^2}$, $((11,1,5))_{16^7 8^1 4^2 2^1}$, $((12,1,5))_{16^7 8^1 4^1 2^3}$, $((13,1,5))_{16^7 8^1 2^5}$, $((11,1,5))_{16^7 8^1 4^2 2^1}$, $((11,1,5))_{16^7 4^3 2^1}$, $((12,1,5))_{16^7 4^2 2^3}$, $((13,1,5))_{16^7 4^1 2^5}$, $((13,1,5))_{16^7 2^6}$, $((12,1,5))_{16^7 8^1 4^1 2^3}$, $((12,1,5))_{16^7 4^2 2^3}$, $((13,1,5))_{16^7 4^1 2^5}$, $((13,1,5))_{16^7 8^1 2^5}$,
nine 63-QMDS codes $((11,1,5))_{16^7 4^4}$, $((12,1,5))_{16^7 4^3 2^2}$, $((13,1,5))_{16^7 4^2 2^4}$, $((12,1,5))_{16^7 4^3 2^2}$, $((13,1,5))_{16^7 4^2 2^4}$, $((14,1,5))_{16^7 4^1 2^6}$, $((14,1,5))_{16^7 2^7}$, $((13,1,5))_{16^7 4^2 2^4}$, $((14,1,5))_{16^7 4^1 2^6}$
and a 127-QMDS code $((15,1,5))_{16^7 2^8}$.
\end{example}

\begin{example}
Take $s=56$, $d=2$ and $l=1$ in Corollary \ref{uv} (i). We can obtain the following 133 m-QMDS codes.

(1) Seven 56-QMDS codes $((6,56,3))_{56^5 2^1}$, $((7,56,3))_{56^4 28^1 2^2}$, $((7,56,3))_{56^4 14^1 4^1 2^1}$, $((7,56,3))_{56^4 8^1 7^1 2^1}$, $((8,56,3))_{56^4 14^1 2^3}$, $((8,56,3))_{56^4 7^1 4^1 2^2}$, $((9,56,3))_{56^4 7^1 2^4}$.

(2) Fourteen 168-QMDS codes
$((6,56,3))_{56^5 4^1}$, $((7,56,3))_{56^4 28^1 4^1 2^1}$,
$((7,56,3))_{56^4 14^1 4^2}$, $((7,56,3))_{56^4 8^1 7^1 4^1}$,  $((8,56,3))_{56^4 14^1 4^1 2^2}$, $((8,56,3))_{56^4 7^1 4^2 2^1}$, $((9,56,3))_{56^4 7^1 4^1 2^3}$, $((7,56,3))_{56^5 2^2}$,
$((8,56,3))_{56^4 28^1 2^3}$, $((8,56,3))_{56^4 14^1 4^1 2^2}$, $((8,56,3))_{56^4 8^1 7^1 2^2}$, $((9,56,3))_{56^4 14^1 2^4}$, $((9,56,3))_{56^4 7^1 4^1 2^3}$, $((10,56,3))_{56^4 7^1 2^5}$.

(3) Seven 336-QMDS codes
$((6,56,3))_{56^5 7^1}$, $((7,56,3))_{56^4 28^1 7^1 2^1}$,
$((7,56,3))_{56^4 14^1 7^1 4^1}$, $((7,56,3))_{56^4 8^1 7^2}$, $((8,56,3))_{56^4 14^1 7^2 2^2}$, $((8,56,3))_{56^4 7^2 4^1 2^1}$, $((9,56,3))_{56^4 7^2 2^3}$.

(4) Twenty-one 392-QMDS codes
$((6,56,3))_{56^5 8^1}$, $((7,56,3))_{56^4 28^1 8^1 2^1}$,
$((7,56,3))_{56^4 14^1 8^1 4^1}$, $((7,56,3))_{56^4 8^2 7^1}$, $((8,56,3))_{56^4 14^1 8^1 2^2}$, $((8,56,3))_{56^4 8^1 7^1 4^1 2^1}$, $((9,56,3))_{56^4 8^1 7^1 2^3}$, $((7,56,3))_{56^5 4^1 2^1}$, $((8,56,3))_{56^4 28^1 4^1 2^2}$, $((8,56,3))_{56^4 14^1 4^2 2^1}$, $((8,56,3))_{56^4 8^1 7^1 4^1 2^1}$, $((9,56,3))_{56^4 14^1 4^1 2^3}$, $((9,56,3))_{56^4 7^1 4^2 2^2}$, $((10,56,3))_{56^4 7^1 4^1 2^4}$,
$((8,56,3))_{56^5 2^3}$, $((9,56,3))_{56^4 28^1 2^4}$,
$((9,56,3))_{56^4 14^1 4^1 2^3}$, $((9,56,3))_{56^4 8^1 7^1 2^3}$, $((10,56,3))_{56^4 14^1 2^5}$, $((10,56,3))_{56^4 7^1 4^1 2^4}$,  $((11,56,3))_{56^4 7^1 2^6}$.

(5) Fourteen 728-QMDS codes
$((6,56,3))_{56^5 14^1}$, $((7,56,3))_{56^4 28^1 14^1 2^1}$, $((7,56,3))_{56^4 14^2 4^1}$, $((7,56,3))_{56^4 14^1 8^1 7^1}$, $((8,56,3))_{56^4 14^2 2^2}$,  $((8,56,3))_{56^4 14^1 7^1 4^1 2^1}$, $((9,56,3))_{56^4 14^1 7^1 2^3}$,
$((7,56,3))_{56^5 7^1 2^1}$, $((8,56,3))_{56^4 28^1 7^1 2^2}$, $((8,56,3))_{56^4 14^1 7^1 4^1 2^1}$, $((8,56,3))_{56^4 8^1 7^2 2^1}$, $((9,56,3))_{56^4 14^1 7^1 2^3}$, $((9,56,3))_{56^4 7^2 4^1 2^2}$, $((10,56,3))_{56^4 7^22^4}$.

(6) Twenty-eight 1512-QMDS code
$((6,56,3))_{56^5 28^1}$, $((7,56,3))_{56^4 28^2 2^1}$,
$((7,56,3))_{56^4 28^1 14^1 4^1}$, $((7,56,3))_{56^4 28^1 8^1 7^1}$, $((8,56,3))_{56^4 28^1 14^1 2^2}$,
$((8,56,3))_{56^4 28^1 7^1 4^1 2^1}$, $((9,56,3))_{56^4 28^1 7^1 2^3}$,
$((7,56,3))_{56^5 14^1 2^1}$, $((8,56,3))_{56^4 28^1 14^1 2^2}$, $((8,56,3))_{56^4 14^2 4^1 2^1}$,
$((8,56,3))_{56^4 14^1 8^1 7^1 2^1}$, $((9,56,3))_{56^4 14^2 2^3}$, $((9,56,3))_{56^4 14^1 7^1 4^1 2^2}$, $((10,56,3))_{56^4 14^1 7^1 2^4}$,
$((7,56,3))_{56^5 7^1 4^1}$,
$((8,56,3))_{56^4 28^1 7^1 4^1 2^1}$, $((8,56,3))_{56^4 14^1 7^1 4^2}$, $((8,56,3))_{56^4 8^1 7^2 4^1}$, $((9,56,3))_{56^4 14^1 7^1 4^1 2^2}$, $((9,56,3))_{56^4 7^2 4^2 2^1}$,
$((10,56,3))_{56^4 7^2 4^1 2^3}$, $((8,56,3))_{56^5 7^1 2^2}$, $((9,56,3))_{56^4 28^1 7^1 2^3}$, $((9,56,3))_{56^4 14^1 7^1 4^1 2^2}$, $((9,56,3))_{56^4 8^1 7^2 2^2}$, $((10,56,3))_{56^4 14^1 7^1 2^4}$, $((10,56,3))_{56^4 7^2 4^1 2^3}$, $((11,56,3))_{56^4 7^2 2^5}$.

(7) Forty-two 3080-QMDS codes $((7,56,3))_{56^5 28^1 2^1}$, $((8,56,3))_{56^4 28^2 2^2}$, $((8,56,3))_{56^4 28^1 14^1 4^1 2^1}$, $((8,56,3))_{56^4 28^1 8^1 7^1 2^1}$, $((9,56,3))_{56^4 28^1 14^1  2^3}$, $((9,56,3))_{56^4 28^1 7^1 4^1 2^2}$, $((10,56,3))_{56^4 28^1 7^1 2^4}$,
$((7,56,3))_{56^5 14^1 4^1}$, $((8,56,3))_{56^4 28^1 14^1 4^1 2^1}$, $((8,56,3))_{56^4 14^2 4^2}$,
$((8,56,3))_{56^4 14^1 8^1 7^1 4^1}$, $((9,56,3))_{56^4 14^2 4^1 2^2}$, $((9,56,3))_{56^4 14^1 7^1 4^2 2^1}$, $((10,56,3))_{56^4 14^1 7^1 4^1 2^3}$,
$((7,56,3))_{56^5 8^1 7^1}$,  $((8,56,3))_{56^4 28^1 8^1 7^1 2^1}$, $((8,56,3))_{56^4 14^1 8^1 7^1 4^1}$, $((8,56,3))_{56^4 8^2 7^2}$, $((9,56,3))_{56^4 14^1 8^1 7^1 2^2}$, $((9,56,3))_{56^4 8^1 7^2 4^1 2^1}$,  $((10,56,3))_{56^4 8^1 7^2 2^3}$,
$((8,56,3))_{56^5 14^1 2^2}$, $((9,56,3))_{56^4 28^1 14^1 2^3}$, $((9,56,3))_{56^4 14^2 4^1 2^2}$, $((9,56,3))_{56^4 14^1 8^1 7^1 2^2}$,  $((10,56,3))_{56^4 14^2 2^4}$, $((10,56,3))_{56^4 14^1 7^1 4^1 2^3}$, $((11,56,3))_{56^4 14^1 7^1 2^5}$,
$((8,56,3))_{56^5 7^1 4^1 2^1}$, $((9,56,3))_{56^4 28^1 7^1 4^1 2^2}$,  $((9,56,3))_{56^4 14^1 7^1 4^2 2^1}$, $((9,56,3))_{56^4 8^1 7^2 4^1 2^1}$, $((10,56,3))_{56^4 14^1 7^1 4^1 2^3}$, $((10,56,3))_{56^4 7^2 4^2 2^2}$,  $((11,56,3))_{56^4 7^2 4^1 2^4}$,
$((9,56,3))_{56^5 7^1 2^3}$, $((10,56,3))_{56^4 28^1 7^1 2^4}$, $((10,56,3))_{56^4 14^1 7^1 4^1 2^3}$, $((10,56,3))_{56^4 8^1 7^2 2^3}$,  $((11,56,3))_{56^4 14^1 7^1 2^5}$, $((11,56,3))_{56^4 7^2 4^1 2^4}$, $((12,56,3))_{56^4 7^2 2^6}$.
\end{example}

\noindent {TABLE VI.} New $m$-QMDS codes $((2d+l+n_1,s^l,d+1))_{s^{2d+l} s_1^1  s_2^1\cdots s_{n_1}^1}$ with $m=(s_1 \cdots s_{n_1}-1)s^l$ from Corollary \ref{uv}.
\vspace{-7mm}
\renewcommand\arraystretch{1}
\setlength{\tabcolsep}{3.4pt}
\begin{longtable}{cccccc}
\label{Table} \\
\hline\hline $l$&$d$&$s$&$s_1s_2\cdots s_{n_1}$&$((2d+l+n_1,s^l,d+1))_{s^{2d+l} s_1^1 \cdots s_{n_1}^1}$&$m$ \\  \hline
\endfirsthead
\multicolumn{6}{c}
{{\bfseries }} \\
\hline\hline $l$&$d$&$s$&$s_1s_2\cdots s_{n_1}$&$((2d+l+n_1,s^l,d+1))_{s^{2d+l} s_1^1 \cdots s_{n_1}^1}$&$m$ \\  \hline
\endhead
\hline \\
\endfoot
\endlastfoot
0&1&4&$2\times2$&$((4,1,2))_{4^2 2^2}$&3\\

0&1&6&$3$&$((3,1,2))_{6^2 3^1}$&2\\
0&1&6&$2\times3$&$((4,1,2))_{6^2 2^1 3^1}$&5\\

0&1&8&$4$&$((3,1,2))_{8^2 4^1}$&3\\
0&1&8&$4\times2$&$((4,1,2))_{8^2 4^1 2^1}$&7\\
0&1&8&$2\times2$&$((4,1,2))_{8^2 2^2}$&3\\
0&1&8&$2\times 2\times2$&$((5,1,2))_{8^2 2^3}$&7\\
0&1&$s\geq9$&$s_1 s_2 \cdots s_{n_1}$&$((2+n_1,1,2))_{s^2 s_1^1 s_2^1 \cdots s_{n_1}^1}$&$s_1s_2\cdots s_{n_1}-1$\\

0&2&4&$2\times2$&$((6,1,3))_{4^4 2^2}$&3\\

0&2&8&$4$&$((5,1,3))_{8^4 4^1}$&3\\
0&2&8&$4\times2$&$((6,1,3))_{8^4 4^1 2^1}$&7\\
0&2&8&$2\times2$&$((6,1,3))_{8^4 2^2}$&3\\
0&2&8&$2\times 2\times2$&$((7,1,3))_{8^4 2^3}$&7\\
0&2&$s\geq9$&$s_1 s_2 \cdots s_{n_1}$&$((4+n_1,1,3))_{s^4 s_1^1 s_2^1 \cdots s_{n_1}^1}$&$s_1s_2\cdots s_{n_1}-1$\\

0&3&8&$4$&$((7,1,4))_{8^6 4^1}$&3\\
0&3&8&$4\times2$&$((8,1,4))_{8^6 4^1 2^1}$&7\\
0&3&8&$2\times2$&$((8,1,4))_{8^6 2^2}$&3\\
0&3&8&$2\times 2\times2$&$((9,1,4))_{8^6 2^3}$&7\\

0&3&9&$3$&$((7,1,4))_{9^6 3^1}$&2\\
0&3&9&$3\times 3$&$((8,1,4))_{9^6 3^2}$&8\\
0&3&$s\geq16$&$s_1 s_2 \cdots s_{n_1}$&$((6+n_1,1,4))_{s^6 s_1^1 s_2^1 \cdots s_{n_1}^1}$&$s_1s_2\cdots s_{n_1}-1$\\

0&4&8&$4$&$((9,1,5))_{8^8 4^1}$&3\\
0&4&8&$4\times2$&$((10,1,5))_{8^8 4^1 2^1}$&7\\
0&4&8&$2\times2$&$((10,1,5))_{8^8 2^2}$&3\\
0&4&8&$2\times 2\times2$&$((11,1,5))_{8^8 2^3}$&7\\

0&4&9&$3$&$((9,1,5))_{9^8 3^1}$&2\\
0&4&9&$3\times 3$&$((10,1,5))_{9^8 3^2}$&8\\
0&4&$s\geq16$&$s_1 s_2 \cdots s_{n_1}$&$((8+n_1,1,5))_{s^8 s_1^1 s_2^1 \cdots s_{n_1}^1}$&$s_1s_2\cdots s_{n_1}-1$\\

0&$d\geq 5$&$s\geq16$&$s_1 s_2 \cdots s_{n_1}$&$((2d+n_1,1,d+1))_{s^{2d} s_1^1 s_2^1 \cdots s_{n_1}^1}$&$s_1s_2\cdots s_{n_1}-1$\\

1&1&4&$2$&$((4,4,2))_{4^3 2^1}$&4\\
1&1&4&$2\times2$&$((5,4,2))_{4^3 2^2}$&12\\

1&1&8&$2$&$((4,8,2))_{8^3 2^1}$&8\\
1&1&8&$4$&$((4,8,2))_{8^3 4^1}$&24\\
1&1&8&$4\times2$&$((5,8,2))_{8^3 4^1 2^1}$&56\\
1&1&8&$2\times2$&$((5,8,2))_{8^3 2^2}$&24\\
1&1&8&$2\times 2\times2$&$((6,8,2))_{8^3 2^3}$&56\\
1&1&$s\geq9$&$s_1 s_2 \cdots s_{n_1}$&$((3+n_1,s,2))_{s^3 s_1^1 s_2^1 \cdots s_{n_1}^1}$&$(s_1s_2\cdots s_{n_1}-1)s$\\

1&2&4&$2$&$((6,4,3))_{4^4 2^1}$&4\\
1&2&4&$2\times2$&$((7,4,3))_{4^4 2^2}$&12\\

1&2&8&$2$&$((6,8,3))_{8^4 2^1}$&4\\
1&2&8&$4$&$((6,8,3))_{8^4 4^1}$&12\\
1&2&8&$4\times2$&$((7,8,3))_{8^4 4^1 2^1}$&56\\
1&2&8&$2\times2$&$((7,8,3))_{8^4 2^2}$&24\\
1&2&8&$2\times 2\times2$&$((8,8,3))_{8^4 2^3}$&56\\
1&2&$s\geq9$&$s_1 s_2 \cdots s_{n_1}$&$((5+n_1,s,2))_{s^4 s_1^1 s_2^1 \cdots s_{n_1}^1}$&$(s_1s_2\cdots s_{n_1}-1)s$\\

1&3&8&$2$&$((8,8,4))_{8^7 2^1}$&8\\
1&3&8&$4$&$((8,8,4))_{8^7 4^1}$&24\\
1&3&8&$4\times2$&$((9,8,4))_{8^7 4^1 2^1}$&56\\
1&3&8&$2\times2$&$((9,8,4))_{8^7 2^2}$&24\\
1&3&8&$2\times 2\times2$&$((10,8,4))_{8^7 2^3}$&56\\

1&3&9&$3$&$((8,9,4))_{9^7 3^1}$&18\\
1&3&9&$3\times 3$&$((9,9,4))_{9^7 3^2}$&72\\
1&3&$s\geq16$&$s_1 s_2 \cdots s_{n_1}$&$((7+n_1,s,4))_{s^7 s_1^1 s_2^1 \cdots s_{n_1}^1}$&$(s_1s_2\cdots s_{n_1}-1)s$\\

1&4&$16$&2&$((10,16,5))_{16^9 2^1}$&16\\
1&4&$16$&4&$((10,16,5))_{16^9 2^1}$&48\\
1&4&$16$&8&$((10,16,5))_{16^9 2^1}$&112\\
1&4&$16$&$8\times2$&$((11,16,5))_{16^9 2^1}$&240\\
1&4&$16$&$4\times4$&$((11,16,5))_{16^9 2^1}$&240\\
1&4&$16$&$4\times2$&$((11,16,5))_{16^9 2^1}$&112\\
1&4&$16$&$2\times2$&$((11,16,5))_{16^9 2^1}$&48\\
1&4&$16$&$4\times2\times2$&$((12,16,5))_{16^9 2^1}$&240\\
1&4&$16$&$2\times2\times2$&$((12,16,5))_{16^9 2^1}$&112\\
1&4&$16$&$2\times2\times2\times2$&$((13,16,5))_{16^9 2^1}$&240\\
1&4&$s\geq27$&$s_1 s_2 \cdots s_{n_1}$&$((9+n_1,s,5))_{s^7 s_1^1 s_2^1 \cdots s_{n_1}^1}$&$(s_1s_2\cdots s_{n_1}-1)s$\\

1&$d\geq 5$&$s\geq16$&$s_1 s_2 \cdots s_{n_1}$&$((2d+1+n_1,s,d+1))_{s^{2d+1} s_1^1 s_2^1 \cdots s_{n_1}^1}$&$(s_1s_2\cdots s_{n_1}-1)s$\\

$l\geq2$&$d\geq 1$&$s\geq8$&$s_1 s_2 \cdots s_{n_1}$&$((2d+l+n_1,s^l,d+1))_{s^{2d+l} s_1^1 s_2^1 \cdots s_{n_1}^1}$&$(s_1s_2\cdots s_{n_1}-1)s^l$\\\hline\hline
\end{longtable}
\vspace{-5mm}
\begin{corollary}\label{5lie}
An $(s_1s_2\cdots s_n-1)$-QMDS code $((4+n,1,3))_{s^4 s_1^1 s_2^1 \cdots s_n^1}$ exists for $(s_1s_2\cdots s_n)|s$, $s\geq4$ and $s\neq 6,10$.
\end{corollary}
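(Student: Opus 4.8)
The plan is to read this corollary off from Theorem \ref{tn}(ii) in the case $d=2$, $l=0$, $n_1=n$ (no $q$-part is needed), so that the whole task reduces to checking the single hypothesis of that theorem: the existence of an $OA(s^2,5,s,2)$ for every $s\ge 4$ with $s\ne 6,10$.

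To produce that array I would use the same translation already invoked in the proof of Corollary \ref{1wei}(ii): an $OA(s^2,n_1,s,2)$ exists if and only if $n_1-2$ mutually orthogonal Latin squares of order $s$ exist, so here I need three MOLS of order $s$. For a prime power $s\ge 4$, Lemma \ref{bush}(i) already gives an $OA(s^2,s+1,s,2)$ with $s+1\ge 5$, and deleting columns leaves an $OA(s^2,5,s,2)$. The only non-prime-powers below $12$ are $s=6$ and $s=10$ — exactly the excluded values (there are no two MOLS of order $6$, and three MOLS of order $10$ are not known) — while for every non-prime-power $s\ge 12$ the result of \cite{hss} quoted earlier furnishes more than two MOLS of order $s$, hence an $OA(s^2,5,s,2)$. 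Together with the prime-power case this covers all admissible $s$.

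With the $OA(s^2,5,s,2)$ available, I would finish exactly as in the proof of Theorem \ref{tn}(ii): since $(s_1s_2\cdots s_n)\mid s$, an $OA(s,n,s_1^1 s_2^1\cdots s_n^1,n)$ exists (take $s/(s_1\cdots s_n)$ copies of the full factorial array), and replacing an $s$-level column of $OA(s^2,5,s,2)$ by it through the expansive replacement method (Lemma \ref{kuozhang}) yields an $OA(s^2,4+n,s^4 s_1^1 s_2^1\cdots s_n^1,2)$ with $MD=3$, which carries the trivial one-block orthogonal partition of strength $2$. Applying Lemma \ref{ym} with $h=3$, $t'=2$ and Definition \ref{mQMDS} then gives the $(s_1s_2\cdots s_n-1)$-QMDS code $((4+n,1,3))_{s^4 s_1^1 s_2^1\cdots s_n^1}$.

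I do not anticipate any real difficulty: the argument is a bookkeeping specialization of Theorem \ref{tn}, and the one genuinely external ingredient is the MOLS fact for composite orders, which is precisely why $s=6$ and $s=10$ must be excluded; making that step self-contained (three MOLS of order $s$ for every non-prime-power $s\ge 12$) would be the most delicate point of a from-scratch proof.
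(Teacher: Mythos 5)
Your proposal is correct and follows essentially the same route as the paper: the paper's proof also reduces the corollary to Theorem \ref{tn} with $d=2$, $l=0$ after citing the existence of an $OA(s^2,5,s,2)$ for $s\geq 4$, $s\neq 6,10$, justified by the same MOLS facts used in Corollary \ref{1wei}(ii). You simply spell out the expansive-replacement and Lemma \ref{ym} steps that the paper leaves implicit.
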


\begin{proof}
It is similar to Corollary \ref{1wei} (ii). Since there exists an $OA(s^2,5,s,2)$ for $s\geq4$ and $s\neq 6,10$,
it follows from Theorem \ref{tn} that the corollary holds.
\end{proof}

\begin{example}
Take $s=9$ and $s_1=3$. Corollary \ref{5lie} produces a 2-QMDS code $((5,1,3))_{9^4 3^1}$ with a basis state
$|\phi\rangle=
|0	0	0	0	0\rangle+
|0	1	2	3	1\rangle+
|0	2	1	6	2\rangle+
|0	3	6	7	0\rangle+
|0	4	8	1	1\rangle+
|0	5	7	4	2\rangle+
|0	6	3	5	0\rangle+
|0	7	5	8	1\rangle+
|0	8	4	2	2\rangle+
|1	0	2	7	2\rangle+
|1	1	1	1	0\rangle+
|1	2	0	4	1\rangle+
|1	3	8	5	2\rangle+
|1	4	7	8	0\rangle+
|1	5	6	2	1\rangle+
|1	6	5	0	2\rangle+
|1	7	4	3	0\rangle+
|1	8	3	6	1\rangle+
|2	0	1	5	1\rangle+
|2	1	0	8	2\rangle+
|2	2	2	2	0\rangle+
|2	3	7	0	1\rangle+
|2	4	6	3	2\rangle+
|2	5	8	6	0\rangle+
|2	6	4	7	1\rangle+
|2	7	3	1	2\rangle+
|2	8	5	4	0\rangle+
|3	0	6	8	1\rangle+
|3	1	8	2	2\rangle+
|3	2	7	5	0\rangle+
|3	3	3	3	1\rangle+
|3	4	5	6	2\rangle+
|3	5	4	0	0\rangle+
|3	6	0	1	1\rangle+
|3	7	2	4	2\rangle+
|3	8	1	7	0\rangle+
|4	0	8	3	0\rangle+
|4	1	7	6	1\rangle+
|4	2	6	0	2\rangle+
|4	3	5	1	0\rangle+
|4	4	4	4	1\rangle+
|4	5	3	7	2\rangle+
|4	6	2	8	0\rangle+
|4	7	1	2	1\rangle+
|4	8	0	5	2\rangle+
|5	0	7	1	2\rangle+
|5	1	6	4	0\rangle+
|5	2	8	7	1\rangle+
|5	3	4	8	2\rangle+
|5	4	3	2	0\rangle+
|5	5	5	5	1\rangle+
|5	6	1	3	2\rangle+
|5	7	0	6	0\rangle+
|5	8	2	0	1\rangle+
|6	0	3	4	2\rangle+
|6	1	5	7	0\rangle+
|6	2	4	1	1\rangle+
|6	3	0	2	2\rangle+
|6	4	2	5	0\rangle+
|6	5	1	8	1\rangle+
|6	6	6	6	2\rangle+
|6	7	8	0	0\rangle+
|6	8	7	3	1\rangle+
|7	0	5	2	1\rangle+
|7	1	4	5	2\rangle+
|7	2	3	8	0\rangle+
|7	3	2	6	1\rangle+
|7	4	1	0	2\rangle+
|7	5	0	3	0\rangle+
|7	6	8	4	1\rangle+
|7	7	7	7	2\rangle+
|7	8	6	1	0\rangle+
|8	0	4	6	0\rangle+
|8	1	3	0	1\rangle+
|8	2	5	3	2\rangle+
|8	3	1	4	0\rangle+
|8	4	0	7	1\rangle+
|8	5	2	1	2\rangle+
|8	6	7	2	0\rangle+
|8	7	6	5	1\rangle+
|8	8	8	8	2\rangle$.
\end{example}

\noindent {TABLE VII.} New 2-QMDS codes $((2d+1,1,d+1))_{s^{2d} 3^1}$ from Corollaries \ref{uv} and \ref{5lie}.
\vspace{-7mm}
\renewcommand\arraystretch{1}
\setlength{\tabcolsep}{7pt}
\begin{longtable}{lll}
\label{Table} \\
\hline\hline $d$&$s$&$((2d+1,1,d+1))_{s^{2d} 3^1}$\\  \hline
\endfirsthead
\multicolumn{3}{c}
{{\bfseries }} \\
\hline\hline $d$&$s$&$((2d+1,1,d+1))_{s^{2d} 3^1}$\\  \hline
\endhead
\hline \\
\endfoot
\endlastfoot
1&6&$((3,1,2))_{6^2 3^1}$\\
1&9&$((3,1,2))_{9^2 3^1}$\\
1&12&$((3,1,2))_{12^2 3^1}$\\
1&15&$((3,1,2))_{15^2 3^1}$\\
1&18&$((3,1,2))_{18^2 3^1}$\\
1&21&$((3,1,2))_{21^2 3^1}$\\
1&$s=3k(k\geq8)$&$((3,1,2))_{s^2 3^1}$\\\hline

2&9&$((5,1,3))_{9^4 3^1}$\\
2&12&$((5,1,3))_{12^4 3^1}$\\
2&15&$((5,1,3))_{15^4 3^1}$\\
2&18&$((5,1,3))_{18^4 3^1}$\\
2&21&$((5,1,3))_{21^4 3^1}$\\
2&$s=3k(k\geq8)$&$((5,1,3))_{s^4 3^1}$\\\hline

3&9&$((7,1,4))_{9^6 3^1}$\\
3&27&$((7,1,4))_{27^6 3^1}$\\
3&45&$((7,1,4))_{45^6 3^1}$\\
3&63&$((7,1,4))_{63^6 3^1}$\\
3&72&$((7,1,4))_{72^6 3^1}$\\
3&81&$((7,1,4))_{81^6 3^1}$\\
3&\tabincell{l}{$s=9\times 3^p \times u_1\times u_2\times \cdots \times u_v$\\($p\geq0$, $u_i$ is a prime power and $u_i\geq 5$)}&$((7,1,4))_{s^6 3^1}$\\\hline

4&9&$((9,1,5))_{9^8 3^1}$\\
4&27&$((9,1,5))_{27^8 3^1}$\\
4&63&$((9,1,5))_{63^8 3^1}$\\
4&72&$((9,1,5))_{72^8 3^1}$\\
4&81&$((9,1,5))_{81^8 3^1}$\\
4&\tabincell{l}{$s=9\times 3^p \times u_1\times u_2\times \cdots \times u_v$\\($p\geq0$, $u_i$ is a prime power and $u_i\geq 8$)}&$((9,1,5))_{s^8 3^1}$\\\hline

5&27&$((11,1,6))_{27^{10} 3^1}$\\
5&81&$((11,1,6))_{81^{10} 3^1}$\\
5&243&$((11,1,6))_{243^{10} 3^1}$\\
5&297&$((11,1,6))_{297^{10} 3^1}$\\
5&\tabincell{l}{$s=27\times 3^p \times u_1\times u_2\times \cdots \times u_v$\\($p\geq0$, $u_i$ is a prime power and $u_i\geq 11$)}&$((11,1,6))_{s^{10} 3^1}$\\\hline

$\cdots$&$\cdots$&$\cdots$\\\hline\hline
\end{longtable}
By using the expansive replacement method, we can obtain a lot of $m$-QMDS codes from known QMDS codes.

\begin{theorem}\label{huan}
If a QMDS code $((2d+n,s_2^n,d+1))_{s_1^{2d} s_2^{n}}$ is constructed by Lemma \ref{ym} for $s_1>s_2$ and $s_1=q_1 q_2 \cdots q_{n_1}$, then there exists an $s_2^n(\frac{s_1}{w}-1)$-QMDS code $((2d-1+n+n_1,s_2^{n},d+1))_{s_1^{2d-1} s_2^{n} q_1^1 q_2^1 \cdots q_{n_1}^1}$ for $w=max\{s_2,q_1,q_2,\ldots,q_{n_1}\}$.
\end{theorem}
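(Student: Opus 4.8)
The plan is to recycle the orthogonal array and orthogonal partition that sit behind the given QMDS code, perform one expansive replacement on it, and then feed the new array back into Lemma~\ref{ym}. First I would make the hypothesis explicit: since $((2d+n,s_2^n,d+1))_{s_1^{2d}s_2^n}$ is produced by Lemma~\ref{ym}, there is an $OA(r,2d+n,s_1^{2d}s_2^n,t)$ $A$ with $MD(A)=h$ and an orthogonal partition $\{A_1,\dots,A_{s_2^n}\}$ of some strength $t'$ with $\min\{t'+1,h\}=d+1$ (so $t'\ge d$ and $h\ge d+1$); the number of parts is forced to be $s_2^n$ because $s_1>s_2$ makes $s_2^n$ the quantum Singleton value of a length-$(2d+n)$ code with this alphabet. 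Since $s_1=q_1q_2\cdots q_{n_1}$, the array $B$ whose $s_1$ rows list all of $\mathbb{Z}_{q_1}\times\cdots\times\mathbb{Z}_{q_{n_1}}$ is an $OA(s_1,n_1,q_1^1q_2^1\cdots q_{n_1}^1,n_1)$ of index unity, i.e.\ it has the largest strength possible for its width.

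Next I would replace one $s_1$-level column of $A$ by $B$ via the expansive replacement method (Lemma~\ref{kuozhang}), obtaining $A'=OA(r,2d-1+n+n_1,s_1^{2d-1}s_2^n q_1^1\cdots q_{n_1}^1,t)$. Three points need checking. First, the row partition is untouched, so the parts become $A'_1,\dots,A'_{s_2^n}$ with the same column replaced in each; because $B$ has maximal strength, any $t'$ columns of $A'_i$ — say $j\le n_1$ from the new block and the rest old columns — are balanced exactly as uniformly as the $t'-j+1$ columns of $A_i$ formed by the old block-column together with those old columns, which are balanced since $A_i$ has strength $t'$; hence each $A'_i$ still has strength $t'$, and by the same reasoning $A'$ retains strength $t$. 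Second, $MD(A')\ge h$ is immediate, since a single-column substitution by an array whose distinct rows differ somewhere never decreases a Hamming distance, and $MD(A')=h$ once the substituted column is chosen so that some pair of $A$ at distance $h$ agrees in it — such an $s_1$-level column exists when $d\ge2$ because a pair at distance $d+1$ cannot differ in all $2d$ of the $s_1$-columns, while if $t'=d$ the exact value of $MD(A')$ is irrelevant anyway. Third, Lemma~\ref{ym} then applies to $A'$ with $\{A'_1,\dots,A'_{s_2^n}\}$ and yields an $((2d-1+n+n_1,s_2^n,d+1))_{s_1^{2d-1}s_2^n q_1^1\cdots q_{n_1}^1}$ code, the distance being $\min\{t'+1,MD(A')\}=\min\{t'+1,h\}=d+1$.

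It then remains to evaluate the quantum Singleton value and read off $m$. For a code of this length one minimizes $\prod_{j\in C}s_j$ over $|C|=(2d-1+n+n_1)-2d=n+n_1-1$; since $s_1$ strictly exceeds $s_2$ and every $q_i$, an optimal $C$ discards all $2d-1$ columns of level $s_1$ and then discards the single largest among the $n$ columns of level $s_2$ and the $n_1$ columns of levels $q_1,\dots,q_{n_1}$, giving $s_2^n q_1\cdots q_{n_1}/w = s_1 s_2^n/w$ with $w=\max\{s_2,q_1,\dots,q_{n_1}\}$ (using $q_1\cdots q_{n_1}=s_1$). Hence $m = s_1 s_2^n/w - s_2^n = s_2^n(s_1/w-1)$, and by Definition~\ref{mQMDS} the constructed code is $s_2^n(s_1/w-1)$-QMDS.

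The main obstacle is the second point: one must ensure the column substitution does not push $MD(A')$ strictly above $d+1$, since a larger distance would change the alphabet-dependent Singleton value and hence the very meaning of ``$m$-QMDS''. The argument above is clean when $d\ge2$ or when $t'=d$; the borderline case $d=1$ with $t'\ge2$ (so $h=2$) requires more care, since there one may need to choose the relabelling used in the substitution so that the values taken by minimum-distance pairs in the replaced column map to Hamming distance~$1$ in $B$. The bookkeeping in the first point (that the full-factorial replacement has exactly the strength needed to preserve both $t$ and $t'$) is routine once stated correctly.
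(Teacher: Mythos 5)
Your proposal follows essentially the same route as the paper's proof: extract the $OA(r,2d+n,s_1^{2d}s_2^n,t)$ with $MD=h$ and its orthogonal partition of strength $t'$ underlying the given code, replace one $s_1$-level column by an $OA(s_1,n_1,q_1^1\cdots q_{n_1}^1,n_1)$ via Lemma~\ref{kuozhang}, and re-apply Lemma~\ref{ym} together with Definition~\ref{mQMDS} after computing the Singleton value $s_1 s_2^n/w$. Your additional verifications (preservation of the partition strength, the behaviour of the minimal distance under substitution, and the explicit minimization over $|C|=n+n_1-1$) are details the paper simply asserts, so the argument is correct and matches the paper's.
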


\begin{proof}
By Lemma \ref{ym}, there exists an $OA(r,2d+n,s_1^{2d} s_2^n,t)$ with $MD=h$ and an orthogonal partition $\{A_1,A_2,\cdots, A_{s_2^n}\}$ of strength $t'$ and $d+1=min\{t'+1,h\}$. By Lemma \ref{kuozhang}, we can obtain an $OA(r,2d-1+n+n_1,s_1^{2d-1} s_2^n q_1^1 q_2^1 \cdots q_{n_1}^1,t)$ with $MD=h$ and an orthogonal partition $\{B_1,B_2,\ldots,B_{s_2^n}\}$ of strength $t'$. By Lemma \ref{ym} and Definition \ref{mQMDS}, there exists an $s_2^n(\frac{s_1}{w}-1)$-QMDS code $((2d-1+n+n_1,s_2^{n},d+1))_{s_1^{2d-1} s_2^{n} q_1^1 q_2^1 \cdots q_{n_1}^1}$ for $w=max\{s_2,q_1,q_2,\ldots,q_{n_1}\}$.
\end{proof}

\begin{example}
Let $4=q_1\times q_2=2\times2$. With the QMDS code $((7,8,3))_{4^4 2^3}$ in \cite{chao24} Theorem \ref{huan} produces an 8-QMDS code $((8,8,3))_{4^3 2^5}$ with eight basis states
$|\phi_1\rangle=
|0	0	0	0	0	0	0	0\rangle+
|0	1	1	0	1	0	1	1\rangle+
|0	2	2	1	0	1	0	1\rangle+
|0	3	3	1	1	1	1	0\rangle+
|1	0	1	1	0	1	1	0\rangle+
|1	1	0	1	1	1	0	1\rangle+
|1	2	3	0	0	0	1	1\rangle+
|1	3	2	0	1	0	0	0\rangle+
|2	0	2	1	1	0	1	1\rangle+
|2	1	3	1	0	0	0	0\rangle+
|2	2	0	0	1	1	1	0\rangle+
|2	3	1	0	0	1	0	1\rangle+
|3	0	3	0	1	1	0	1\rangle+
|3	1	2	0	0	1	1	0\rangle+
|3	2	1	1	1	0	0	0\rangle+
|3	3	0	1	0	0	1	1\rangle$,
$|\phi_2\rangle=
|0	0	0	0	0	0	0	0\rangle+
|0	1	1	0	1	0	1	1\rangle+
|0	2	2	1	0	1	0	1\rangle+
|0	3	3	1	1	1	1	0\rangle+
|1	0	1	1	0	1	1	0\rangle+
|1	1	0	1	1	1	0	1\rangle+
|1	2	3	0	0	0	1	1\rangle+
|1	3	2	0	1	0	0	0\rangle+
|2	0	2	1	1	0	1	1\rangle+
|2	1	3	1	0	0	0	0\rangle+
|2	2	0	0	1	1	1	0\rangle+
|2	3	1	0	0	1	0	1\rangle+
|3	0	3	0	1	1	0	1\rangle+
|3	1	2	0	0	1	1	0\rangle+
|3	2	1	1	1	0	0	0\rangle+
|3	3	0	1	0	0	1	1\rangle$,
$|\phi_3\rangle=
|0	0	1	1	1	0	1	0\rangle+
|0	1	0	1	0	0	0	1\rangle+
|0	2	3	0	1	1	1	1\rangle+
|0	3	2	0	0	1	0	0\rangle+
|1	0	0	0	1	1	0	0\rangle+
|1	1	1	0	0	1	1	1\rangle+
|1	2	2	1	1	0	0	1\rangle+
|1	3	3	1	0	0	1	0\rangle+
|2	0	3	0	0	0	0	1\rangle+
|2	1	2	0	1	0	1	0\rangle+
|2	2	1	1	0	1	0	0\rangle+
|2	3	0	1	1	1	1	1\rangle+
|3	0	2	1	0	1	1	1\rangle+
|3	1	3	1	1	1	0	0\rangle+
|3	2	0	0	0	0	1	0\rangle+
|3	3	1	0	1	0	0	1\rangle$,
$|\phi_4\rangle=
|0	0	1	1	1	1	0	1\rangle+
|0	1	0	1	0	1	1	0\rangle+
|0	2	3	0	1	0	0	0\rangle+
|0	3	2	0	0	0	1	1\rangle+
|1	0	0	0	1	0	1	1\rangle+
|1	1	1	0	0	0	0	0\rangle+
|1	2	2	1	1	1	1	0\rangle+
|1	3	3	1	0	1	0	1\rangle+
|2	0	3	0	0	1	1	0\rangle+
|2	1	2	0	1	1	0	1\rangle+
|2	2	1	1	0	0	1	1\rangle+
|2	3	0	1	1	0	0	0\rangle+
|3	0	2	1	0	0	0	0\rangle+
|3	1	3	1	1	0	1	1\rangle+
|3	2	0	0	0	1	0	1\rangle+
|3	3	1	0	1	1	1	0\rangle$,
$|\phi_5\rangle=
|0	0	2	0	1	0	0	1\rangle+
|0	1	3	0	0	0	1	0\rangle+
|0	2	0	1	1	1	0	0\rangle+
|0	3	1	1	0	1	1	1\rangle+
|1	0	3	1	1	1	1	1\rangle+
|1	1	2	1	0	1	0	0\rangle+
|1	2	1	0	1	0	1	0\rangle+
|1	3	0	0	0	0	0	1\rangle+
|2	0	0	1	0	0	1	0\rangle+
|2	1	1	1	1	0	0	1\rangle+
|2	2	2	0	0	1	1	1\rangle+
|2	3	3	0	1	1	0	0\rangle+
|3	0	1	0	0	1	0	0\rangle+
|3	1	0	0	1	1	1	1\rangle+
|3	2	3	1	0	0	0	1\rangle+
|3	3	2	1	1	0	1	0\rangle$,
$|\phi_6\rangle=
|0	0	2	0	1	1	1	0\rangle+
|0	1	3	0	0	1	0	1\rangle+
|0	2	0	1	1	0	1	1\rangle+
|0	3	1	1	0	0	0	0\rangle+
|1	0	3	1	1	0	0	0\rangle+
|1	1	2	1	0	0	1	1\rangle+
|1	2	1	0	1	1	0	1\rangle+
|1	3	0	0	0	1	1	0\rangle+
|2	0	0	1	0	1	0	1\rangle+
|2	1	1	1	1	1	1	0\rangle+
|2	2	2	0	0	0	0	0\rangle+
|2	3	3	0	1	0	1	1\rangle+
|3	0	1	0	0	0	1	1\rangle+
|3	1	0	0	1	0	0	0\rangle+
|3	2	3	1	0	1	1	0\rangle+
|3	3	2	1	1	1	0	1\rangle$,
$|\phi_7\rangle=
|0	0	3	1	0	0	1	1\rangle+
|0	1	2	1	1	0	0	0\rangle+
|0	2	1	0	0	1	1	0\rangle+
|0	3	0	0	1	1	0	1\rangle+
|1	0	2	0	0	1	0	1\rangle+
|1	1	3	0	1	1	1	0\rangle+
|1	2	0	1	0	0	0	0\rangle+
|1	3	1	1	1	0	1	1\rangle+
|2	0	1	0	1	0	0	0\rangle+
|2	1	0	0	0	0	1	1\rangle+
|2	2	3	1	1	1	0	1\rangle+
|2	3	2	1	0	1	1	0\rangle+
|3	0	0	1	1	1	1	0\rangle+
|3	1	1	1	0	1	0	1\rangle+
|3	2	2	0	1	0	1	1\rangle+
|3	3	3	0	0	0	0	0\rangle$ and
$|\phi_8\rangle=
|0	0	3	1	0	1	0	0\rangle+
|0	1	2	1	1	1	1	1\rangle+
|0	2	1	0	0	0	0	1\rangle+
|0	3	0	0	1	0	1	0\rangle+
|1	0	2	0	0	0	1	0\rangle+
|1	1	3	0	1	0	0	1\rangle+
|1	2	0	1	0	1	1	1\rangle+
|1	3	1	1	1	1	0	0\rangle+
|2	0	1	0	1	1	1	1\rangle+
|2	1	0	0	0	1	0	0\rangle+
|2	2	3	1	1	0	1	0\rangle+
|2	3	2	1	0	0	0	1\rangle+
|3	0	0	1	1	0	0	1\rangle+
|3	1	1	1	0	0	1	0\rangle+
|3	2	2	0	1	1	0	0\rangle+
|3	3	3	0	0	1	1	1\rangle$.
\end{example}

\section{Conclusion}
Quantum error-correcting codes theory deals with the problem of encoding quantum states into qudits such that a small number of errors can be detected, measured, and efficiently corrected. Wang et al. \cite{Wang13} build longer codes over mixed alphabets of distance 2. However, the existence and construction of QECCs $((n,K,d + 1))s_1s_2\cdots s_n$ for $d\geq2$ are still an open question \cite{Wang13}. In the paper, we introduce the definition of an $m$-QMDS code. We construct some new infinite classes of $m$-QMDS codes over mixed alphabets by using the construction methods of OAs such as deleting columns, orthogonal partitions, difference schemes and expansive replacement method. The codes obtained are listed in Tables I-VII, and are not limited to the classes listed. The results are not just existence results, but constructive results which provide a solid answer to the open problem.
These codes and methods also lay a foundation for constructing QMDS codes. In the future, we will study the construction of more QMDS and $m$-QMDS codes $((n,K,d + 1))s_1s_2\cdots s_n$ from asymmetrical OAs.\\
\textbf{Acknowledgments} This research was supported by NNSF of China Grant 11971004.\\
\textbf{Data Availability} Data sharing not applicable to this article as no datasets were generated or analysed during the current study.\\
\textbf{Conflict of interests} The authors have no competing interests to declare that are relevant to the content of this article.

\end{sloppypar}

\end{document}